\newtheorem{corollary}{Corollary}
\newtheorem{proposition}{Proposition}
\newtheorem{example}{Example}
\newtheorem{assumption}{Assumption}
\DeclareMathOperator{\pa}{pa}
\title{
  Deriving Bounds and Inequality Constraints Using Logical \\
  Relations Among Counterfactuals
}
\author{
  {\bf Noam Finkelstein} \\
  Department of Computer Science \\ 
  Johns Hopkins University\\
  Baltimore, MD \\
  \And
  {\bf Ilya Shpitser}  \\
  Department Computer Science \\ 
  Johns Hopkins University\\
  Baltimore, MD \\
}
\begin{document}

\maketitle

\begin{abstract}
  Causal parameters may not be point identified in the presence of unobserved
  confounding. However, information about non-identified parameters, in the form
  of bounds, may still be recovered from the observed data in some cases. We
  develop a new general method for obtaining bounds on causal parameters using
  rules of probability and restrictions on counterfactuals implied by causal
  graphical models. We additionally provide inequality constraints on
  functionals of the observed data law implied by such causal models. Our
  approach is motivated by the observation that logical relations between
  identified and non-identified counterfactual events often yield information
  about non-identified events. We show that this approach is powerful enough to
  recover known sharp bounds and tight inequality constraints, and to derive
  novel bounds and constraints.
\end{abstract}

\section{INTRODUCTION}

Directed acyclic graphs (DAGs) are commonly used to represent causal
relationships between random variables, with a directed edge from $A$ to $Y$ ($A
\rightarrow Y$) representing that $A$ ``directly causes'' $Y$. Under the
interventionist view of causality, this relationship is taken to mean that $Y$
may change if any set of variables ${\bf S}$ that includes $A$ is set, possibly
contrary to fact, to values ${\bf s}$. The operation that counterfactually sets
values of variables is known as an \emph{intervention} and has been denoted by
the $do({\bf s})$ operator in \cite{pearl09causality}.

The variable $Y$ after an intervention $do(a)$ is performed is denoted $Y(a)$,
and is referred to as a potential outcome, or a counterfactual random variable
\cite{neyman23app}. Distributions over counterfactuals such as $P(Y(a))$ may be
used to quantify cause-effect relationships by means of a hypothetical
randomized controlled trial (RCT). For example, the \emph{average causal effect
  (ACE)} is defined as $\mathbb{E}[Y(a)] - \mathbb{E}[Y(a')]$ and is a
comparison of means in two arms of a hypothetical RCT, with the arms defined by
$do(a)$ and $do(a')$ operations.

Since counterfactuals are not observed directly in the data, assumptions are
needed to link counterfactual parameters with the observed data distribution.
These assumptions are provided by causal models (often represented by DAGs),
which are substantively justified using background knowledge or learned directly
from data \cite{spirtes01causation}.

Under some causal models, counterfactual distributions $P(Y(a))$ may be
identified exactly (expressed as functionals of the observed data distribution)
\cite{tian02on,shpitser06id,huang06do}. However, when causally relevant
variables are not observed, counterfactual distributions may not be identified.

The ideal approach for dealing with non-identified parameters is additional data
collection that would ``expand'' the observed data distribution by rendering
previously unobserved variables observable, and thus a previously non-identified
parameter identifiable.

If additional data collection is not possible, the alternative is to impose
additional parametric assumptions on the causal model which would imply
identification, or retreat to a weaker notion of identification, where the
observed data distribution is used to obtain \emph{bounds} on the non-identified
parameter of interest. The existence of such bounds may yield substantively
significant conclusions, for instance by indicating that the causal effect is
present (if the corresponding parameter is bounded away from $0$).

A well known example of a causal model with a non-identified causal parameter
with non-trivial bounds is the instrumental variable (IV) model
\cite{balke97bounds, manski90nonparametric, robins89analysis}. The original
sharp bounds for the causal parameter in the IV model were derived using a
computationally intensive, and difficult to interpret, convex polytope vertex
enumeration approach \cite{balke97bounds}. Subsequent work
\cite{ramsahai12causal} has extended this approach to other scenarios where a
counterfactual objective can be expressed as a linear function of the observed
data law. These approaches are limited both by computational complexity and by
the required linear form of the objective.

Bounds on causal parameters are related to inequality constraints on the
observed data law implied by hidden variable DAGs, as demonstrated by the
derivation of the original IV inequalities \cite{balke97bounds}, and subsequent
work on inequality constraints \cite{evans12graphical, kang06inequality}. The
approach developed in \cite{wolfe16inflation} for deriving such inequality
constraints is very general and is conjectured to be able to recover all
constraints implied by a hidden variable DAG, but is computationally challenging
to evaluate, and has no bounded running time.

In this paper, we present a new approach for deriving bounds on non-identified
causal parameters that directly uses restrictions implied by a causal model,
rules of probability theory, and logical relations between identified and
non-identified counterfactual events. We then build on this approach to present
a new class of inequality constraints on the observed data law.

The paper is organized as follows. We introduce notation and relevant concepts
in Section \ref{sec:prelim}. We provide an intuitive introduction to our method
by re-deriving known sharp bounds in the binary IV model \cite{balke97bounds} in
Section \ref{sec:iv}. In Section \ref{sec:bounds}, we present results important
to our approach, and provide a general algorithm for obtaining bounds on causal
parameters. Section \ref{sec:inequalities} demonstrates how our approach may be
used to derive generalized instrumental variable inequalities, of which the
original IV inequalities and Bonet's inequalities \cite{bonet01instrumental} are
special cases. Finally, in Section \ref{sec:examples} we make use of these
results to provide novel bounds and inequality constraints for two sample
models.
\section{PRELIMINARIES}
\label{sec:prelim}

We let $\mathcal G$ denote a DAG with a vertex set $\bf V$ such that each
element of $\bf V$ corresponds to a random variable. The \emph{statistical}
model of $\mathcal G$ is the set of joint distributions $P({\bf V})$ that are
Markov relative to the DAG ${\cal G}$. Specifically, it's the set $\{P({\bf V})
: P({\bf V}) = \prod_{V \in {\bf V}} P(V| \pa_{\mathcal G}(V))\}$, where
$\pa_{\mathcal G}(V)$ is the set of parents of $V$ in $\mathcal G$.

The \emph{causal} model of a DAG is also a set of joint distributions, but over
counterfactual random variables. A counterfactual $Y(a)$ denotes the random
variable $Y$ in a counterfactual world, where $A$ is exogenously set to the
value $a$.

Such a causal model can be described by a set of structural equations
$\{f_V(\pa_{\mathcal G}(V),\epsilon_V) \mid V \in {\bf V}\}$, where each $f_V$
can be thought of as a causal mechanism that maps values of $\pa_{\cal G}(V)$
(parents of $V$) and the exogenous noise term $\epsilon_V$ to a value of $V$.
For a given set of values ${\bf a}$ of $\pa_{\cal G}(V)$, variation of
$\epsilon_V$ yields the counterfactual random variable $V({\bf a})$ as the
output of $f_V({\bf a}, \epsilon_V)$.

Other counterfactuals can be defined through recursive substitution
\cite{thomas13swig}, as follows:
{\small
\begin{align}
  \label{eq:po}
  Y({\bf a}) =
  \begin{cases}
    {\bf a}_Y
      &\mbox{if } Y \in {\bf A}\\
    f_Y(\{V({\bf a}) \mid V \in \pa_{\mathcal G}(Y)\}, \epsilon_Y)
      &\mbox{otherwise }
  \end{cases}
\end{align}
}
This definition, following from the structural equation model view of the causal
model of a DAG, allows the effects of exogenous intervention to propagate
downstream to the outcome of interest. Under this view, only the noise variables
in the set $\{ \epsilon_V : V \in {\bf V} \}$ are random. The distributions of
the observed data and of counterfactual random variables can be thought of
as the distributions of different functions of $\{ \epsilon_V : V \in {\bf V}
\}$ as described in (\ref{eq:po}).

As a notational convention, given any set $\{ Y_1, \ldots, Y_k \} \equiv {\bf
  Y}$, and a set of treatments ${\bf A}$ set to ${\bf a}$, we will denote a set
of counterfactuals $\{ Y_1({\bf a}), \ldots, Y_k({\bf a}) \}$ defined by
(\ref{eq:po}) by the shorthand ${\bf Y}({\bf a})$. We will sometimes denote
single variable events $Y({\bf a}) = y$ via the shorthand $y({\bf a})$ for
conciseness, similarly multivariable events ${\bf Y}({\bf a}) = {\bf y}$ will
sometimes be denoted as ${\bf y}({\bf a})$.

One consequence of (\ref{eq:po}) is that some counterfactuals $Y({\bf a})$ only
depend on a subset of values in ${\bf a}$, specifically those values that make
an appearance in one of the base cases of the definition. Restrictions of this
sort are sometimes called \emph{exclusion restrictions}.

The recursive substitution definition above implies the following
\emph{generalized consistency property}, which states that for any disjoint
subsets ${\bf A},{\bf B},{\bf Y}$ of ${\bf V}$,
{\small
  \begin{align}
    \label{eqn:consist}
    {\bf B}({\bf a}) =
    {\bf b}\text{ implies }{\bf Y}({\bf a},{\bf b}) = {\bf Y}({\bf a}).
  \end{align}
}
If all variables ${\bf V}$ in a causal model represented by a DAG ${\cal G}$ are
observed, every interventional distribution $P({\bf Y}({\bf a}))$, where ${\bf
  A} \subseteq {\bf V}$, ${\bf Y} \subseteq {\bf V} \setminus {\bf A}$, is
identified from $P({\bf V})$ via the following functional:
$\sum_{{\bf V} \setminus ({\bf Y} \cup {\bf A})} \prod_{V \in {\bf V} \setminus
  ({\bf Y} \cup {\bf A})} P(V | \pa_{\cal G}(V)) \vert_{{\bf A} = {\bf a}}$,
known as the g-formula \cite{robins86new}.

In practice, not all variables in a causal model may be observed.
In a hidden variable causal model, represented by a DAG ${\cal G}({\bf V} \cup {\bf H})$,
where no data is available on variables in ${\bf H}$, not every counterfactual
distribution is identified.

Reasoning about parameter identification is often performed via an acyclic
directed mixed graph (ADMG) summary of ${\cal G}({\bf V} \cup {\bf H})$ called a
\emph{latent projection} ${\cal G}({\bf V})$ \cite{verma90equiv}. The latent
projection keeps vertices corresponding to ${\bf V}$, and adds two kinds of
edges between these vertices. A directed edge ($\to$) between any $V_i,V_j \in
{\bf V}$ is added if there exists a directed path from $V_i$ to $V_j$ in ${\cal
  G}({\bf V} \cup {\bf H})$ and all intermediate vertices on the path are in
${\bf H}$. A bidirected edge ($\leftrightarrow$) between any $V_i,V_j \in {\bf
  V}$ is added if there exists a path from $V_i$ to $V_j$ which starts with an
edge into $V_i$, ends with an edge into $V_j$, has no two adjacent edges
pointing into the same vertex on the path, and has all intermediate elements in
${\bf H}$. See Fig.~\ref{fig:iv} (a) and (b) for a simple example of this
construction for the IV model.

If $P({\bf Y}({\bf a}))$ is identified from $P({\bf V})$ in a causal model
represented by a hidden variable DAG ${\cal G}({\bf V} \cup {\bf H})$, then
its identifying functional may be expressed using ${\cal G}({\bf V})$ via the
ID algorithm \cite{tian02on}.  See \cite{richardson17nested} for details.

If $P({\bf y}({\bf a}))$ is not identified from $P({\bf V})$ in a causal model
given by ${\cal G}({\bf V} \cup {\bf H})$, bounds may nevertheless be placed
on this distribution. Before describing our approach for obtaining bounds in
full generality, we illustrate how it may be used to obtain known sharp bounds
for a non-identified counterfactual probability in the binary IV model.
\section{BOUNDS IN THE BINARY INSTRUMENTAL VARIABLE MODEL}
\label{sec:iv}

The instrumental variable model is represented graphically in Fig.~\ref{fig:iv}
(a), with its latent projection shown in Fig.~\ref{fig:iv} (b). We are
interested in the counterfactual probability $P(y(a))$, which is known not to be
identified without parametric assumptions. In this section, we demonstrate that
known sharp bounds on $P(y(a))$ can be recovered by reasoning causally about the
structure of this graph and the associated counterfactual distributions.
\begin{figure}[t]
\centering
{\small
\begin{tikzpicture}[>=stealth, node distance=1.0cm]
\tikzstyle{vertex} = [draw, thick, ellipse, minimum size=4.0mm,
            inner sep=1pt]
             
    \tikzstyle{edge} = [
    ->, blue, very thick
    ]

	\begin{scope}
    \node[vertex, circle] (z) {$Z$};
    \node[vertex, circle] (a) [right of=z] {$A$};
    \node[vertex, circle] (y) [right of=a] {$Y$};
    \node[vertex, gray, circle] (h) [above of=a, xshift=0.5cm, yshift=-0.5cm] {$H$};

    \draw[edge] (z) to (a);
    \draw[edge] (a) to (y);
    \draw[edge, red] (h) to (a);
    \draw[edge, red] (h) to (y);    

    \node[below of=a, yshift=0.4cm] (l) {$(a)$};

    	\end{scope}

	\begin{scope}[xshift=4.0cm]
    \node[vertex, circle] (z) {$Z$};
    \node[vertex, circle] (a) [right of=z] {$A$};
    \node[vertex, circle] (y) [right of=a] {$Y$};

    \draw[edge] (z) to (a);
    \draw[edge] (a) to (y);
    \draw[edge, red, <->] (a) [bend left=40] to (y);

    \node[below of=a, yshift=0.4cm] (l) {$(b)$};

    	\end{scope}
\end{tikzpicture}
}
\vspace{-0.4cm}
\caption{
  (a) The classic instrumental variable model, and (b) its latent projection.
}
\label{fig:iv}
\end{figure}
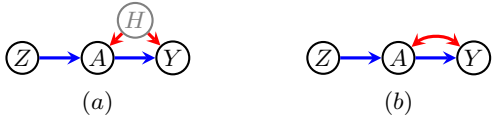
\subsection*{The Instrumental Variable Thought Experiment}

We consider the binary IV model in Fig.~\ref{fig:iv} (a) in the setting of
clinical trials with non-compliance. We interpret $Z$ to indicate treatment
assignment, $A$ to indicate treatment actually taken, and $Y$ to indicate a
clinical outcome of interest. Unobserved factors, such as personality traits,
may influence both treatment decision and outcome, and thus act as confounders.

Suppose we intervene to assign some subject to treatment arm $z$. We observe
that after this intervention, our subject takes treatment $a$ and has outcome
$y$. Thus, in this subject, we observe the event $A(z) = a \land Y(z) = y$.

Now suppose we are interested in intervening directly on treatment for the
same subject. We would like to set $A = a$, leaving $Z$ (arm assignment) to the
physician's choice. Under the model, the outcome $Y$ depends only on $A$ and the
noise term $\epsilon_Y$, representing in this case subject-specific personality
traits. Because these traits are unchanged by intervention on $A$ or $Z$, and
$A$ is set to the same value it took under our first intervention, we must
conclude that under this second intervention we would observe the same outcome
as under the first, denoted by the event $Y(a) = y$.

This thought experiment demonstrates that an event in one hypothetical world,
under one intervention, can imply an event under another intervention. We call
this phenomenon ``cross-world implication,'' and it is formalized in proposition
\ref{prop:cross-world}. We now develop the intuition further to recover sharp
bounds for the IV model with binary random variables.
\subsection*{Sharp Bounds in the Binary IV Model} \label{sec:sharp-iv}

In the following derivation of sharp bounds for $P(y(a))$ in the binary IV
model, we will denote values $1$ for all variables by lower case (e.g. $a$), and
values $0$ by a lower case with a bar (e.g. $\bar{a}$).

Our strategy will be to \emph{partition} the event $Y(a) = y$ into smaller, more
manageable events:
{\small
\begin{align}
\label{eq:iv1}
A(\bar{z}) &= a \land Y(\bar{z}) = y \\
\label{eq:iv2}
A(\bar{z}) = \bar{a} \land A(z) &= a \land Y(z) = y \\
\label{eq:iv3}
A(\bar{z}) = \bar{a} \land A(z) &= \bar{a} \land Y(a) = y.
\end{align}}
Note that events in (\ref{eq:iv2}) and (\ref{eq:iv3}) are related to the
\emph{compliers} and \emph{never-takers} principal strata
\cite{frangakis02principal}.

To see that these events form a partition (i.e. are mutually exclusive and
exhaustive) of the event $Y(a) = y$, we first observe that by the exclusion
restriction in the model, and generalized consistency, $Y(\bar{z})$ and $Y(z)$
in equations (\ref{eq:iv1}) and (\ref{eq:iv2}) respectively will be equal to
$Y(a)$. Then we can see that (\ref{eq:iv1}) covers the portion of $Y(a) = y$
where $A(\bar{z}) = a$, (\ref{eq:iv2}) covers the portion where $A(\bar{z}) =
\bar{a} \land A(z) = a$, and (\ref{eq:iv3}) covers the portion where $A(\bar{z})
= A(z) = \bar{a}$.

Because these events partition $Y(a) = y$, the sum of their probabilities will
be equal to $P(Y(a) = y)$, and a sum of lower-bounds on their probabilities will
yield a lower bound on $P(Y(a) = y)$. A general form of partitions of this sort
for counterfactual events under models with an exclusion restriction will be
given in Proposition \ref{prop:partition}.

The event (\ref{eq:iv1}) represents a single world event with an identified
probability, that therefore does not need to be bounded. We will see that we can
recover sharp bounds without bounding the probability of (\ref{eq:iv3}).

We therefore turn our attention to event (\ref{eq:iv2}). This event can be
understood as a conjunction of events in two worlds. First, $A(\bar{z}) =
\bar{a}$ in the world in which $Z$ is set to $\bar{z}$. Then, $A(z) = a \land
Y(z) = y$ in the world in which $Z$ is set to $z$. As a cross world event,
(\ref{eq:iv2}) does not have an identified density. Our goal will be to provide
lower bounds for this event that are identified.

First, we find some event $E_1$ under the intervention $Z = \bar{z}$ that
entails $A(\bar{z}) = \bar{a}$. We can then identify outcomes under the
(conflicting) intervention $Z = z$ that are \emph{compatible } with $E_1$, i.e.
that would not be ruled out by observing $E_1$ under intervention $Z = \bar{z}$.
We denote such events by $\psi_{z}(E_1)$. Now, by definition:
{\small \begin{align}
  \label{eq:e1}
  P(E_1&) - P(E_1, \neg \big(A(z) = a \land Y(z) = y \big))\\\nonumber
  &= P(E_1, \big(A(z) = a \land Y(z) = y \big)).
\end{align}}
We note that $E_1 \implies \psi_{z}(E_1)$ by construction, as $E_1$ rules out
all outcomes in the sample space not in $\psi_{z}(E_1)$, so $P(E_1,
\neg \big(A(z) = a \land Y(z) = y \big))$ is bounded from above by
$P(\psi_{z}(E_1), \neg \big(A(z) = a \land Y(z) = y \big))$.

Substituting this bound into equation (\ref{eq:e1}) yields:
{\small \begin{align}\nonumber
  P(E_1) - &P(\psi_{z}( E_1 ), \neg \big(A(z) = a \land Y(z) = y \big))\\
  \label{eq:lbe1}&\le P(E_1, \big(A(z) = a \land Y(z) = y \big)).
\end{align}}
Because $E_1$ was chosen to entail $A(\bar{z}) = \bar{a}$,
the probability of event (\ref{eq:iv2}) is bounded from below by $P(E_1,
\big(A(z) = a \land Y(z) = y \big))$. Therefore by equation (\ref{eq:lbe1}),
the probability of event (\ref{eq:iv2}) is also bounded from below by:
{\small \begin{align}
  \label{eq:iv-bound1}
P(E_1) - P(\psi_{z}(E_1), \neg \big(A(z) = a \land Y(z) = y \big)).
\end{align}}
Through exactly analogous reasoning, we can obtain another lower bound on
the probability of event (\ref{eq:iv2}) by starting
with some event $E_2$ under $Z = z$ that entails $A(z) = a \land Y(z) =
y$:
{\small \begin{align}
  \label{eq:iv-bound2}
P(E_2) - P(\psi_{\bar{z}}(E_2), \neg \big(A(\bar{z}) = \bar{a}\big)).
\end{align}}
To apply these bounds, we must select events that satisfy the criteria for $E_1$
and $E_2$. We start by examining potential events $E_1$. We note that there are
only three options: $A(\bar{z}) = \bar{a}$, $A(\bar{z}) = \bar{a} \land
Y(\bar{z}) = \bar{y}$, and $A(\bar{z}) = \bar{a} \land Y(\bar{z}) = y$. It turns
out that we need only consider the latter two of these (see Proposition
\ref{prop:bound-irrelevance} in Appendix \ref{sec:redundant}).

First, we take $E_1$ to be $A(\bar{z}) = \bar{a} \land Y(\bar{z}) = \bar{y}$.
Then we note $\psi_{z}(A(\bar{z}) = \bar{a} \land Y(\bar{z}) = \bar{y})$ is:
{\small \begin{align*}
  \big(A(z) &= \bar{a} \land Y(z) = \bar{y}\big) \\
  \lor~\big(A(z) &= a \land Y(z) = \bar{y}\big) \\
  \lor~\big(A(z) &= a \land Y(z) = y\big).
\end{align*}}
The only outcome under the intervention $Z = z$ excluded from this event is
$A(z) = \bar{a} \land Y(z) = y$. Any subject who experienced this event could
not have experienced $A(\bar{z}) = \bar{a} \land Y(\bar{z}) = \bar{y}$, due to
the exclusion restriction in the IV model.

According to (\ref{eq:iv-bound1}), to obtain a bound we will need to
subtract from the mass of $E_1$ the mass of the portion of $\psi_{z}(E_1)$
where $A(z) = a \land Y(z) = y$ does not hold. This will be the mass of
the first two events in the disjunction above.

Using this value of $E_1$, we therefore obtain the following lower bound on
the probability of (\ref{eq:iv2}):
{\small \begin{align}
  &P(A(\bar{z}) = \bar{a}, Y(\bar{z}) = \bar{y}) - \\\nonumber
  P(\big(A(z) = \bar{a}, &Y(z) = \bar{y}\big) \lor \big(A(z) = a, Y(z) = \bar{y}\big)).
\end{align}}
We now consider the bound induced by using $A(\bar{z}) = \bar{a} \land Y(\bar{z}) = y$ as
the event $E_1$. Following an analogous procedure, we produce the lower bound:
{\small \begin{align}
  &P(A(\bar{z}) = \bar{a}, Y(\bar{z}) = y) - \\\nonumber
  P(\big(A(z) = \bar{a}, &Y(z) = y\big) \lor \big(A(z) = a_1, Y(z) = \bar{y}\big)).
\end{align}}
Next, we consider possible values of $E_2$. In this simple case, there is only
one such possibility, $A(z) = a \land Y(z) = y$, which of course
entails itself. We observe that
$\psi_{\bar{z}}(E_2) \land \neg \big(A(\bar{z}) = \bar{a}\big)$ is equivalent to $A(\bar{z}) =
a \land Y(\bar{z}) = y$, yielding the following lower bound by expression
(\ref{eq:iv-bound2}):
{\small \begin{align*}
  P(A(z) = a, &Y(z) = y) - P(A(\bar{z}) = a, Y(\bar{z}) = y).
\end{align*}}
We now have all the pieces we need to obtain a sharp lower bound on $P(y(a))$.
We make use of the fact that distributions of potential outcomes after
interventions on $Z$ are identified as the distribution of the corresponding
observed random variables conditioned on $Z$ (since $Z$ is randomized in the IV
model). Noting that the density of the event (\ref{eq:iv2}) is also bounded from
below by $0$, we add the identified density of the event (\ref{eq:iv1}) to the
best of the lower bounds we have obtained for (\ref{eq:iv2}). Then
{\small \begin{align*}
  P(y(a)&) \ge P(a, y \mid \bar{z}) + \\
  &~\max
  \begin{cases}
    0\\
    P(\bar{a}, \bar{y} \mid \bar{z}) -
    P(\bar{a}, \bar{y} \mid z) -
    P(a, \bar{y} \mid z)\\
    P(\bar{a}, y \mid \bar{z}) - P(\bar{a}, y \mid z) - P(a, \bar{y} \mid z)\\
    P(a, y \mid z) - P(a, y \mid \bar{z}).
  \end{cases}
\end{align*}}
This is the sharp lower bound obtained by Balke \cite{balke97bounds}. $P(y(a))$
may be bounded from above by $1$ less the lower bound on $P(\bar y(a))$. In the
binary case, bounds on the ACE may simply be represented as differences between
appropriate bounds on $P(y(a))$ and $P(y(\bar a))$. Each of these bounds bounds
is sharp for the binary IV model. However, characterizing models for which
bounds derived by the procedure we propose, described in the next section, are
sharp is an open problem.
\section{BOUNDS ON COUNTERFACTUAL EVENTS}
\label{sec:bounds}

In this section we provide a graphical criterion for the presence of an
implicative relationship between counterfactual events, which we call
\emph{cross-world implications}, and demonstrate its use in bounding
non-identified probabilities of counterfactual events. We then show how these
bounds can be aggregated to bound non-identified counterfactual events of
primary interest. Proofs of all claims are found in Appendix \ref{sec:proofs}.

\subsection*{
  Causal Irrelevance, Event Implication and Event Contradiction
}

In deriving bounds on a counterfactual event under the IV model, we made
use of the exclusion restriction $Y(z,a) = Y(a)$. We begin this section by
providing a general graphical criterion for when such restrictions appear in
causal models.

\begin{proposition}[Causal Irrelevance]
  \label{prop:irrelevance}
  If all directed paths from $\bf Z$ to $\bf Y$ contain members of $\bf A$, then
  \[{\bf Y}({\bf Z} = {\bf z}, {\bf A} = {\bf a}) = {\bf Y}({\bf A = a}).\]
\end{proposition}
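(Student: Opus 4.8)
The plan is to prove the exclusion restriction $\mathbf{Y}(\mathbf{Z}=\mathbf{z},\mathbf{A}=\mathbf{a}) = \mathbf{Y}(\mathbf{A}=\mathbf{a})$ by structural induction on the recursive substitution definition (\ref{eq:po}), showing that under both interventions, every relevant variable is computed by the same structural equation applied to the same arguments. The key observation is that the hypothesis --- every directed path from $\mathbf{Z}$ to $\mathbf{Y}$ passes through $\mathbf{A}$ --- means that once we ``block'' $\mathbf{A}$ by setting it to $\mathbf{a}$, the intervention on $\mathbf{Z}$ can no longer propagate downstream to reach any member of $\mathbf{Y}$.

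First I would make the blocking idea precise. Let $\mathbf{D}$ be the set of vertices $V$ such that there is a directed path from some $Z \in \mathbf{Z}$ to $V$ containing no member of $\mathbf{A}$ (equivalently, $V$ is a descendant of $\mathbf{Z}$ along a path not yet intercepted by $\mathbf{A}$). By the hypothesis, $\mathbf{D} \cap \mathbf{Y} = \emptyset$, and moreover $\mathbf{D} \cap \mathbf{A} = \emptyset$ by construction (we stop paths at $\mathbf{A}$). The claim I would prove by induction on a topological ordering of $\mathcal{G}$ is: for every $V \notin \mathbf{D}$, $V(\mathbf{z},\mathbf{a}) = V(\mathbf{a})$ as functions of the noise variables. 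Since $\mathbf{Y} \cap \mathbf{D} = \emptyset$, applying this to each $Y \in \mathbf{Y}$ gives the result.

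For the induction: take $V \notin \mathbf{D}$ in topological order. If $V \in \mathbf{A}$, both sides equal $\mathbf{a}_V$ by the base case of (\ref{eq:po}), so they agree. If $V \in \mathbf{Z} \setminus \mathbf{A}$, then $V \in \mathbf{D}$ (the trivial length-zero directed path from $V$ to itself avoids $\mathbf{A}$), contradicting $V \notin \mathbf{D}$; so this case is vacuous. Otherwise $V \notin \mathbf{A} \cup \mathbf{Z}$, and by (\ref{eq:po}) we have $V(\mathbf{z},\mathbf{a}) = f_V(\{W(\mathbf{z},\mathbf{a}) : W \in \pa_{\mathcal{G}}(V)\}, \epsilon_V)$ and $V(\mathbf{a}) = f_V(\{W(\mathbf{a}) : W \in \pa_{\mathcal{G}}(V)\}, \epsilon_V)$. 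It therefore suffices to show $W(\mathbf{z},\mathbf{a}) = W(\mathbf{a})$ for each parent $W$ of $V$. Here I use the defining property of $\mathbf{D}$: if some parent $W$ of $V$ were in $\mathbf{D}$, then a $\mathbf{A}$-avoiding directed path from $\mathbf{Z}$ to $W$ extended by the edge $W \to V$ would be a $\mathbf{A}$-avoiding directed path from $\mathbf{Z}$ to $V$ (using that $V \notin \mathbf{A}$), forcing $V \in \mathbf{D}$ --- a contradiction. Hence every parent $W$ of $V$ satisfies $W \notin \mathbf{D}$, and since $W$ precedes $V$ in topological order, the induction hypothesis gives $W(\mathbf{z},\mathbf{a}) = W(\mathbf{a})$. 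Substituting into the two structural equations yields $V(\mathbf{z},\mathbf{a}) = V(\mathbf{a})$, closing the induction.

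The main obstacle is not conceptual but bookkeeping: one must handle the interaction between the two interventions carefully, in particular the fact that a variable in $\mathbf{A} \cap \mathbf{Z}$ (were it allowed) or the overlap of the intervention sets could create ambiguity in (\ref{eq:po}). The cleanest route is to assume $\mathbf{Z}$, $\mathbf{A}$, $\mathbf{Y}$ are disjoint (consistent with the ambient conventions of the paper) and to note that on the left-hand side, where both $\mathbf{Z}$ and $\mathbf{A}$ are set, the base case of (\ref{eq:po}) fires exactly on $\mathbf{Z} \cup \mathbf{A}$; the argument above then shows the $\mathbf{Z}$-part of this intervention is inert for all non-$\mathbf{D}$ vertices, matching the right-hand side where only $\mathbf{A}$ is set. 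One should also remark that acyclicity of $\mathcal{G}$ is what licenses the topological ordering and hence the induction. With these points in place the proof is complete.
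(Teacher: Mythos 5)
Your proof is correct and is essentially the paper's argument made rigorous: the paper simply asserts that no $V \in {\bf Z} \setminus {\bf A}$ is ever evaluated in the recursive expansion of ${\bf Y}({\bf z},{\bf a})$ via equation (\ref{eq:po}), which is exactly the fact your reachability set $\mathbf{D}$ and topological induction establish. The extra care you take (the blocking set, the vacuity of the ${\bf Z}\setminus{\bf A}$ case, disjointness of the intervention sets) is sound bookkeeping on top of the same idea.
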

In such cases, we say $\bf Z$ is {{\it causally irrelevant}} to $\bf Y$ given
$\bf A$, because after intervening on $\bf A$, intervening on $\bf Z$ will not
affect $\bf Y$. If in addition the joint distribution $P({\bf Y(z), A(z)})$ is
identified, $\bf Z$ is said to be a \emph{generalized instrument} for $\bf A$
with respect to $\bf Y$. If the set $\bf A$ can be partitioned into $\bf A_1$
and $\bf A_2$ such that $\bf A_1$ is causally irrelevant to $\bf Y$ given $\bf
A_2$, then any such $\bf A_1$ is said to be causally irrelevant to $\bf Y$ in
$\bf A$. See also rule $3^*$ in \cite{po-calculus}, and the discussion of
minimal labeling of counterfactuals in \cite{thomas13swig}. As noted earlier,
constraints in a causal model corresponding to the existence of causally
irrelevant variables are sometimes called \emph{exclusion restrictions}.

In the following proposition, we observe that whenever an exclusion restriction
appears in the graph, there exists a logical implication connecting
counterfactual events across interventional worlds.
\begin{proposition}[Cross-world Implication]
  \label{prop:cross-world}
  Let $\bf Z$ be causally irrelevant to $\bf Y$ given $\bf A$. Then
  \[\bf A(z) = a \land Y(z) = y \implies Y(a) = y.\]
\end{proposition}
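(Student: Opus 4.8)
The plan is to combine the generalized consistency property (\ref{eqn:consist}) with the exclusion restriction supplied by Proposition~\ref{prop:irrelevance}. The key observation is that the antecedent $\bf Z = z$, $\bf A(z) = a \land Y(z) = y$ gives us, via consistency, license to replace the intervention $do(\bf z)$ on $\bf Y$ with a combined intervention $do(\bf z, \bf a)$, after which causal irrelevance lets us drop the $\bf z$ entirely.

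First I would argue as follows. Assume $\bf A(z) = a \land Y(z) = y$. Since $\bf A$ and $\bf Y$ are disjoint subsets of $\bf V$ (and disjoint from $\bf Z$), and since $\bf A(z) = a$ holds, the generalized consistency property (\ref{eqn:consist}) applied with the roles ``${\bf A} \mapsto {\bf Z}$, ${\bf B} \mapsto {\bf A}$, ${\bf Y} \mapsto {\bf Y}$'' yields $\bf Y(z, a) = Y(z)$. Combined with the assumed event $\bf Y(z) = y$, this gives $\bf Y(z, a) = y$. Next, because $\bf Z$ is causally irrelevant to $\bf Y$ given $\bf A$, Proposition~\ref{prop:irrelevance} gives the exclusion restriction ${\bf Y}({\bf Z} = {\bf z}, {\bf A} = {\bf a}) = {\bf Y}({\bf A} = {\bf a})$, i.e. $\bf Y(z, a) = Y(a)$ as random variables (equality for every realization of the exogenous noise). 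Chaining these, $\bf Y(a) = Y(z, a) = y$, which is exactly the conclusion $\bf Y(a) = y$.

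I would present this as a short chain of equalities, being careful that the two ingredients are of genuinely different types: Proposition~\ref{prop:irrelevance} is an \emph{unconditional} equality of counterfactual random variables (it holds pointwise in $\epsilon$, for every unit), whereas the consistency step $\bf Y(z,a) = Y(z)$ is only guaranteed on the event $\{\bf A(z) = a\}$ — it is an implication, not an identity. So the logical structure is: on the event in the antecedent, $\bf Y(z) = Y(z,a)$; always, $\bf Y(z,a) = Y(a)$; hence on the antecedent event, $\bf Y(a) = Y(z) = y$.

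The main obstacle, such as it is, is bookkeeping rather than mathematics: verifying that the disjointness hypotheses needed to invoke (\ref{eqn:consist}) are in force (here $\bf Z, \bf A, \bf Y$ are pairwise disjoint by the setup of Proposition~\ref{prop:cross-world}), and being scrupulous about the direction of the consistency implication so that one does not accidentally claim an unconditional identity. I would also note for completeness that the single-world event $\bf A(z) = a$ on the left need not be ``used up'' — the argument only needs it to license the consistency substitution — so the statement is really that the conjunction $\bf A(z) = a \land Y(z) = y$, being a subset of the event where the substitution is valid, is contained in the event $\{\bf Y(a) = y\}$.
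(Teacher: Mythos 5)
Your proof is correct and is essentially identical to the paper's own one-line argument: generalized consistency gives ${\bf Y}({\bf z},{\bf a}) = {\bf Y}({\bf z})$ on the event ${\bf A}({\bf z}) = {\bf a}$, and causal irrelevance (Proposition~\ref{prop:irrelevance}) gives ${\bf Y}({\bf z},{\bf a}) = {\bf Y}({\bf a})$ unconditionally. Your added care about which equality is pointwise and which holds only on the antecedent event is a faithful elaboration of the same argument, not a different route.
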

We define a collection of events to be \emph{compatible} if none of them implies
the negation of any other event in the collection. We define a collection of
events to be \emph{contradictory} if it is not compatible. Conceptually, events
in different hypothetical worlds are contradictory if, under the model, no
single subject can experience all of the events under their corresponding
interventions. For example, in the IV model though experiment, we saw that no
single subject can experience both the event $A(z) = a \land Y(z) = y$ and the
event $Y(a) \neq y$, rendering them contradictory.

It will be of use to be able to determine whether events are contradictory
through reference to the graphical model. To that end, we provide a recursive
graphical criterion that is sufficient to establish that events are
contradictory.

\begin{proposition}[Contradictory Events]
  \label{prop:contradiction}
  Two events $\bf X(a) = x$ and $\bf Y(b) = y$ are contradictory if
  there exists $Z \in \bf X \cup Y$ such that $Z({\bf a}) \neq Z({\bf b})$, and
  all of the following hold:
  \begin{enumerate}
    \item[(i)] Variables in the subsets of both ${\bf X} \cup {\bf A}$ and 
      ${\bf Y} \cup {\bf B}$ causally relevant for $Z$ are set to the same values
      in ${\bf x},{\bf a}$, and ${\bf y},{\bf b}$.
    \item[(ii)] Let $C \in \bf \{X \cup A\} \setminus \{Y \cup B\}$ be any
    variable that is causally relevant to $Z$ in $\bf X \cup A$ and causally
    relevant to $Z$ given $\bf Y \cup B$, with $C$ set to $c$ in $\bf x, a$.
    Then $\bf X(a) = x$ and ${\bf Y(b) = y} \land C({\bf b}) = c'$ are
    known to be contradictory by this proposition if $c \neq c'$.
    \item[(iii)] Let $C \in \bf \{Y \cup B\} \setminus \{X \cup A\}$ be any
    variable that is causally relevant to $Z$ in $\bf Y \cup B$ and causally
    relevant to $Z$ given $\bf X \cup A$, with $C$ set to $c$ in $\bf y, b$.
    Then $\bf Y(b) = y$ and ${\bf X(a) = x} \land C({\bf a}) = c'$ are
    known to be contradictory by this proposition if $c \neq c'$.
\end{enumerate}
\end{proposition}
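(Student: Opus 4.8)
The plan is to argue by induction on the combined size of the two event descriptions (equivalently, on the depth of the recursion in conditions (ii) and (iii)), showing that no single unit in the causal model — i.e., no single assignment of the exogenous noise variables $\{\epsilon_V : V \in {\bf V} \cup {\bf H}\}$ — can simultaneously realize $\bf X(a) = x$ under the intervention $\doo({\bf a})$ and $\bf Y(b) = y$ under the intervention $\doo({\bf b})$. The key observation is that, by the recursive-substitution definition (\ref{eq:po}), the counterfactual value $Z({\bf a})$ is a deterministic function of the noise terms together with the values assigned (by $\bf x, \bf a$) to the variables in $\bf X \cup A$ that are causally relevant to $Z$; Proposition \ref{prop:irrelevance} guarantees that variables causally irrelevant to $Z$ do not enter this function. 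So I would first fix an arbitrary unit consistent with both events and derive a contradiction from the hypothesis $Z({\bf a}) \neq Z({\bf b})$.

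First I would handle the base case, where the only relevant variables for $Z$ on both sides are those common to ${\bf X} \cup {\bf A}$ and ${\bf Y} \cup {\bf B}$, so conditions (ii) and (iii) are vacuous. Condition (i) then says these common relevant variables receive identical values in $\bf x, a$ and in $\bf y, b$; hence $Z({\bf a})$ and $Z({\bf b})$ are the \emph{same} function of the \emph{same} noise terms evaluated at the \emph{same} inputs, so $Z({\bf a}) = Z({\bf b})$, contradicting the assumption. For the inductive step, the subtlety is that $Z$'s relevant ancestors may be split: some live only in $\bf X \cup A$ (covered by (ii)), some only in $\bf Y \cup B$ (covered by (iii)). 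Take such a variable $C$ appearing on the $\bf X$ side only, set to $c$ in $\bf x, a$. On the $\bf Y$ side, $C({\bf b})$ takes some value $c'$; if $c' = c$ for all such $C$ (on both sides), then again the two evaluations of $Z$ agree on all inputs and on the noise, giving $Z({\bf a}) = Z({\bf b})$, a contradiction. Otherwise some such $C$ has $c' \neq c$. But the event "$\bf Y(b) = y$ also realized under this unit, together with $C({\bf b}) = c'$" is exactly the hypothesis of condition (ii), which asserts that $\bf X(a) = x$ and ${\bf Y(b) = y} \land C({\bf b}) = c'$ are contradictory by this same proposition — and this is a strictly smaller instance (one fewer "free" relevant variable, or shallower recursion), so the inductive hypothesis applies and no unit can realize both, contradicting our assumption that the fixed unit does. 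The symmetric argument via condition (iii) handles a discrepancy on the $\bf Y$ side.

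The main obstacle I anticipate is making the induction well-founded and bookkeeping-clean: the recursive calls in (ii) and (iii) \emph{add} a conjunct ($C({\bf b}) = c'$ or $C({\bf a}) = c'$) to one of the events, so the naive "size of event description" is not obviously decreasing. The right measure is something like the number of variables causally relevant to $Z$ that are \emph{not} yet pinned to equal values across the two worlds — each recursive call resolves one such $C$ by forcing its value, and $C$'s own relevant ancestors are a strict subset of the ancestors already under consideration (acyclicity of ${\cal G}$ is essential here). I would therefore state the induction over the partially ordered set of ancestors of $Z$ in the ADMG, or over a lexicographic measure combining recursion depth with the number of unresolved relevant variables, and verify that conditions (ii)–(iii) only ever invoke the proposition on instances strictly below the current one in this order. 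Once the well-foundedness is pinned down, the rest is the deterministic-function argument sketched above, applied uniformly at the base and inductive cases.
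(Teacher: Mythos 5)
Your proposal is correct and follows essentially the same route as the paper's proof: you show that conditions (i)--(iii) force every causally relevant input to $Z$ to agree across the two worlds for any fixed value of the exogenous noise, invoke consistency and causal irrelevance to conclude $Z({\bf a}) = Z({\bf b})$, and ground the recursion in (ii)--(iii) by a finiteness/well-foundedness argument. Your discussion of the termination measure is somewhat more explicit than the paper's one-sentence remark, but the substance of the argument is the same.
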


Propositions \ref{prop:cross-world} and \ref{prop:contradiction} provide
graphical criteria for implication and contradiction, based on paths in the
causal diagram. Both criteria are stated in terms of exclusions restrictions in
the graph. It should be noted that not all exclusion restrictions can be
represented graphically; for example, some exclusions may obtain only for
certain levels of the variables in the graph, and not universally. If such
context-specific exclusion restrictions arise, they may lead to implications or
contradictions not captured by these criteria. However, in the absence of
exclusion restrictions not represented by the graphical model, the graphical
criteria provided by these propositions are necessary and sufficient. See
Appendix \ref{sec:eq-class} for details.

\subsection*{Bounds Via a Single Cross-World Implication}
\label{sec:single-cross-world}

In this section, we describe a lower bound on $P({\bf y(a)})$ induced by a
single cross-world implication, of the sort described by Proposition
\ref{prop:cross-world}. We will demonstrate that this line of reasoning can be
used to recover the bounds for the IV model in
\cite{manski90nonparametric,robins89analysis}, and produce a new class of bounds
on densities of counterfactual events where the density is identified under
intervention on a subset of the treatment variables.

We begin with a simple result from probability theory, which can broadly be
viewed as stating that supersets will always have weakly larger measure than
their subsets.

\begin{proposition}
  \label{prop:manski}
  Let $E_1, E_2$ be any events in a causal model such that $E_1 \implies E_2$.
  Then $P(E_1) \le P(E_2)$.
\end{proposition}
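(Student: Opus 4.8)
The plan is to prove this directly from the definition of probability measures on the underlying sample space. Recall that in the structural equation view, the only source of randomness is the collection of exogenous noise variables $\{\epsilon_V : V \in {\bf V}\}$, and every counterfactual random variable is a deterministic function of these noise variables via recursive substitution as in (\ref{eq:po}). Hence any counterfactual event $E$ corresponds to a measurable subset $\Omega_E$ of the joint noise sample space, namely the set of noise assignments under which $E$ occurs, and $P(E)$ is by definition the measure of $\Omega_E$.

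First I would observe that the hypothesis $E_1 \implies E_2$ means precisely that every noise assignment realizing $E_1$ also realizes $E_2$, i.e. $\Omega_{E_1} \subseteq \Omega_{E_2}$. This is the content of what ``implication'' means for events defined on a common sample space: there is no subject (no draw of the noise variables) that experiences $E_1$ but not $E_2$. Then I would invoke monotonicity of measures: for any measure $P$ and measurable sets $S \subseteq T$, we have $P(S) \le P(T)$, since $T = S \cup (T \setminus S)$ is a disjoint union and $P(T \setminus S) \ge 0$. Applying this with $S = \Omega_{E_1}$ and $T = \Omega_{E_2}$ gives $P(E_1) = P(\Omega_{E_1}) \le P(\Omega_{E_2}) = P(E_2)$, which is the claim.

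There is essentially no obstacle here; the result is a restatement of measure monotonicity, and the only conceptual point worth making explicit is why counterfactual events across distinct interventional worlds can be treated as events on a shared probability space. That is exactly what the recursive substitution definition (\ref{eq:po}) buys us: all potential outcomes, for all interventions, are functions of the single shared noise vector, so statements like $E_1 \implies E_2$ are ordinary set-inclusion statements and standard probability axioms apply verbatim. I would state this connection in one sentence at the start of the proof and then let monotonicity finish it.

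If one wanted to avoid even invoking the sample-space picture explicitly, an alternative one-line argument is: $E_1 \implies E_2$ is logically equivalent to $E_1 \equiv (E_1 \land E_2)$, so $P(E_1) = P(E_1 \land E_2) \le P(E_2)$, where the last inequality is the elementary fact $P(F \land G) \le P(G)$ (which itself follows from $P(G) = P(F \land G) + P(\neg F \land G)$ and non-negativity of probability). I expect the paper's proof to take one of these two equivalent routes; I would present the sample-space version since it also clarifies the cross-world bookkeeping that the rest of the section relies on.
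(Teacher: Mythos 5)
Your proof is correct and matches the paper's (implicit) reasoning: the paper treats this as a ``simple result from probability theory'' --- monotonicity of measure, with all counterfactual events living on the shared sample space of $\epsilon_{\bf V}$ --- and does not even write out a separate proof in the appendix. Your explicit grounding of cross-world implication as set inclusion on the noise space is exactly the intended justification, so there is nothing to add.
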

In the case of the IV model, we have noted the exclusion restriction between
$Z$ and $Y$ given $A$. Due to the implication established by the IV thought
experiment and formalized in Proposition \ref{prop:cross-world}, Proposition
\ref{prop:manski} then yields $P(Y(a) = y) \ge P(A(z) = a, Y(z) = y)$ for any
value of $z$. Noting that $Z$ has no parents in the model, and that therefore
the interventional distribution is identified as the conditional, we can write
this as $\max_z P(A = a, Y = y \mid Z = z)$, which is equivalent to the binary
IV bounds in \cite{manski90nonparametric,robins89analysis}.

We now present new bounds on causal parameters, based on the observation that
the empty set may act as a generalized instrument for any treatment set $\bf A$
with respect to any outcome $\bf Y$. This observation allows us to combine
Propositions \ref{prop:cross-world} and \ref{prop:manski} to obtain the
following Corollary.
\begin{corollary}
  \label{cor:trivial-bounds}
  For any sets of variables $\bf Y, A$,
  \[P({\bf y(a)}) \in [
    P({\bf Y = y, A = a}), 1 - P({\bf Y \neq y, A = a})
  ].\]
\end{corollary}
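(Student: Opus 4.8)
The plan is to instantiate the apparatus of Propositions~\ref{prop:irrelevance}, \ref{prop:cross-world}, and~\ref{prop:manski} with the degenerate instrument $\mathbf{Z} = \emptyset$, as suggested in the paragraph preceding the statement, and then read off the two endpoints of the interval.

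First I would verify that the empty set qualifies. Since there are no directed paths out of $\emptyset$ at all, it is vacuously true that every directed path from $\emptyset$ to $\mathbf{Y}$ contains a member of $\mathbf{A}$, so Proposition~\ref{prop:irrelevance} applies and $\emptyset$ is causally irrelevant to $\mathbf{Y}$ given $\mathbf{A}$. Because intervening on the empty set is no intervention, the base case of the recursive-substitution definition~\eqref{eq:po} (equivalently, generalized consistency~\eqref{eqn:consist}) gives $\mathbf{Y}(\emptyset) = \mathbf{Y}$ and $\mathbf{A}(\emptyset) = \mathbf{A}$ as functions of the noise, so $P(\mathbf{Y}(\emptyset), \mathbf{A}(\emptyset))$ is literally the observed law $P(\mathbf{Y}, \mathbf{A})$; this both identifies the relevant joint and makes $\emptyset$ a generalized instrument for $\mathbf{A}$ with respect to $\mathbf{Y}$.

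For the lower bound I would apply Proposition~\ref{prop:cross-world} with $\mathbf{Z} = \emptyset$, which yields the implication $\{\mathbf{A} = \mathbf{a}\} \cap \{\mathbf{Y} = \mathbf{y}\} \implies \{\mathbf{Y}(\mathbf{a}) = \mathbf{y}\}$; Proposition~\ref{prop:manski} then gives $P(\mathbf{Y} = \mathbf{y},\, \mathbf{A} = \mathbf{a}) \le P(\mathbf{y}(\mathbf{a}))$, the left endpoint. For the upper bound I would apply the same instantiation of Proposition~\ref{prop:cross-world} with an arbitrary value $\mathbf{y}'$ of $\mathbf{Y}$ in place of $\mathbf{y}$, obtaining $\{\mathbf{A} = \mathbf{a}\} \cap \{\mathbf{Y} = \mathbf{y}'\} \implies \{\mathbf{Y}(\mathbf{a}) = \mathbf{y}'\}$, and then take the union over all $\mathbf{y}' \neq \mathbf{y}$ to get $\{\mathbf{A} = \mathbf{a}\} \cap \{\mathbf{Y} \neq \mathbf{y}\} \implies \{\mathbf{Y}(\mathbf{a}) \neq \mathbf{y}\}$, i.e., the left-hand event implies the complement of $\{\mathbf{Y}(\mathbf{a}) = \mathbf{y}\}$. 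Proposition~\ref{prop:manski} then gives $P(\mathbf{Y} \neq \mathbf{y},\, \mathbf{A} = \mathbf{a}) \le 1 - P(\mathbf{y}(\mathbf{a}))$, and rearranging yields the right endpoint.

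I do not expect a genuine obstacle: all the mathematical content lies in the observation that $\emptyset$ is an admissible generalized instrument, after which the result is just Propositions~\ref{prop:cross-world} and~\ref{prop:manski} together with elementary probability. The only points warranting a sentence of care are (i) decomposing the composite event $\{\mathbf{Y} \neq \mathbf{y}\}$ into its constituent values $\mathbf{y}' \neq \mathbf{y}$ so that the cross-world implication can be invoked value by value, and (ii) making explicit that ``no intervention'' returns the observed variables, so that the identified quantities $P(\mathbf{Y} = \mathbf{y}, \mathbf{A} = \mathbf{a})$ and $P(\mathbf{Y} \neq \mathbf{y}, \mathbf{A} = \mathbf{a})$ really are the densities produced by the empty-set instrument.
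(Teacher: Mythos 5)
Your proposal is correct and follows exactly the route the paper intends: it treats $\emptyset$ as a generalized instrument and combines Propositions~\ref{prop:cross-world} and~\ref{prop:manski}, which is precisely the justification the paper gives in the sentence preceding the corollary (and mirrors the argument used for Corollary~\ref{cor:fix-subset}, of which this is the special case $\bf \tilde A = \emptyset$). Your extra care in decomposing $\{\mathbf{Y} \neq \mathbf{y}\}$ over values $\mathbf{y}' \neq \mathbf{y}$ for the upper bound is sound and fills in a detail the paper leaves implicit.
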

A consequence of this Corollary is that for discrete variables, densities of
counterfactual events can be non-trivially bounded for \emph{any} causal model,
though we do not expect these bounds to be informative in general.

Finally, we present bounds on densities of counterfactuals when a subset of the
treatment set can act as a generalized instrument for the remainder.
\begin{corollary}
  \label{cor:fix-subset}
  Let $\bf \tilde A$ and $\bf \hat A$ partition $\bf A$, such that the
  density $P({\bf Y(\tilde a), \hat A(\tilde a)})$ is identified, where $\bf
  \tilde a$ is the subset of $\bf a$ corresponding to $\bf \tilde A$. Then
  \[P({\bf Y(\tilde a) = y, \hat A(\tilde a) = \hat a}) \le P({\bf Y(a) = y})\]
  \[
  1 -
  P({\bf Y(\tilde a) \neq y, \hat A(\tilde a) = \hat a}) \ge P({\bf Y(a) = y}),
  \]
  where $\bf \hat a$ is the subset of $\bf a$ corresponding to $\bf \hat A$.
\end{corollary}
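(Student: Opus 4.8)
The plan is to reduce the claim to an application of Corollary~\ref{cor:trivial-bounds}, but carried out ``inside'' the sub-world obtained by intervening on $\bf \tilde A$ rather than in the original observed-data world. First I would argue that $\bf \tilde A$ is causally irrelevant to $\bf Y$ given $\bf A$: since $\bf \tilde A \subseteq \bf A$, every directed path from $\bf \tilde A$ to $\bf Y$ either reaches $\bf Y$ through some other member of $\bf A$ or its source $\tilde A \in \bf \tilde A$ itself lies on it; in either case the path contains a member of $\bf A$, so the hypothesis of Proposition~\ref{prop:irrelevance} is met and $\bf Y(\tilde a, \hat a) = \bf Y(a)$. The same argument applied with $\hat{\bf A}$ in the role of ``$\bf Y$'' (trivially, a variable's own directed path to itself passes through itself) shows that under the structural-equation semantics the event ${\bf Y(\tilde a) = y} \land {\bf \hat A(\tilde a) = \hat a}$ is exactly the event one would use Corollary~\ref{cor:trivial-bounds} on.

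The key step is then to invoke Proposition~\ref{prop:cross-world} with $\bf Z := \bf \tilde A$, treatment set $\bf A$ (of which $\bf \tilde A$ is a causally irrelevant subset), outcome $\bf Y$, and the ``remaining'' treatments $\bf \hat A$ playing the role of the intermediate variables. Concretely, the proposition gives the cross-world implication
\[
{\bf \hat A(\tilde a) = \hat a} \;\land\; {\bf Y(\tilde a) = y}
\;\Longrightarrow\; {\bf Y(a) = y},
\]
because intervening on $\bf \tilde A$ and observing that $\bf \hat A$ lands on the value $\bf \hat a$ is, by generalized consistency (\ref{eqn:consist}), the same as having intervened on all of $\bf A = (\bf \tilde a, \bf \hat a)$, and causal irrelevance then forces $\bf Y$ to take value $\bf y$ in the $\doo({\bf a})$ world. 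Applying Proposition~\ref{prop:manski} to this implication immediately yields the lower bound
\[
P({\bf Y(\tilde a) = y, \hat A(\tilde a) = \hat a}) \le P({\bf Y(a) = y}).
\]
For the upper bound, I would run the identical argument with $\bf Y \neq y$ in place of $\bf Y = y$: the implication ${\bf \hat A(\tilde a) = \hat a} \land {\bf Y(\tilde a) \neq y} \Rightarrow {\bf Y(a) \neq y}$ gives $P({\bf Y(\tilde a) \neq y, \hat A(\tilde a) = \hat a}) \le P({\bf Y(a) \neq y}) = 1 - P({\bf Y(a) = y})$, which rearranges to the stated upper bound.

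The only genuine subtlety — and the step I would be most careful about — is verifying that Proposition~\ref{prop:cross-world} really does apply when the ``instrument'' $\bf Z$ is a \emph{subset of the treatments being intervened on in the target world} rather than a disjoint auxiliary set as in the IV example. I would check that the causal-irrelevance hypothesis is genuinely symmetric in the needed sense: we need $\bf \tilde A$ causally irrelevant to both $\bf Y$ and $\bf \hat A$ given the rest of $\bf A$, i.e.\ given $\bf \hat A$ and given $\bf Y$'s relevant ancestors respectively, so that $\bf Y(\tilde a, \hat a) = \bf Y(\bf a)$ and $\bf \hat A(\tilde a, \hat a) = \bf \hat A(\bf a) = \bf \hat a$ both hold as structural identities. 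Granting the identification hypothesis on $P({\bf Y(\tilde a), \hat A(\tilde a)})$ — which is what makes the left-hand sides of both displays into functionals of the observed data, and hence makes the bounds usable — the rest is a direct combination of Propositions~\ref{prop:irrelevance}, \ref{prop:cross-world}, and \ref{prop:manski}, exactly paralleling how Corollary~\ref{cor:trivial-bounds} was obtained from the empty-set instrument.
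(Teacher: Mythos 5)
Your proposal is correct and follows essentially the same route as the paper: observe that $\bf \tilde A \subseteq \bf A$ is trivially causally irrelevant to $\bf Y$ given $\bf A$, apply the cross-world implication of Proposition~\ref{prop:cross-world} together with Proposition~\ref{prop:manski} to get the lower bound, and repeat with ${\bf Y \neq y}$ to get the upper bound. Your extra worry about $\bf Z$ overlapping $\bf A$ is resolved even more directly than you suggest, since generalized consistency alone gives ${\bf \hat A(\tilde a) = \hat a} \Rightarrow {\bf Y(\tilde a, \hat a) = Y(\tilde a)}$ and ${\bf Y(\tilde a, \hat a)}$ \emph{is} ${\bf Y(a)}$ by definition.
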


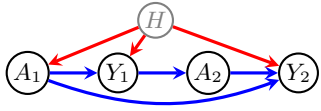
\begin{figure}[t]
    \centering
    {\small
\begin{tikzpicture}[>=stealth, node distance=1.2cm]
  \tikzstyle{vertex} = [draw, thick, ellipse, minimum size=4.0mm, inner sep=1pt]
  \tikzstyle{edge} = [->, blue, very thick]

  \node[vertex, circle] (a1) {$A_1$};
  \node[vertex, circle] (y1) [right of=a1]{$Y_1$};
  \node[vertex, circle] (a2) [right of=y1] {$A_2$};
  \node[vertex, circle] (y2) [right of=a2] {$Y_2$};
  \node[vertex, circle, gray] (h) [above of=y1, xshift=0.5cm, yshift=-0.5cm] {$H$};

  \draw[edge] (a1) to (y1);
  \draw[edge] (y1) to (a2);
  \draw[edge] (a2) to (y2);
  \draw[edge] (a1) [bend right=20] to (y2);
  \draw[edge] (h) [red] to (a1);
  \draw[edge] (h) [red] to (y1);
  \draw[edge] (h) [red] to (y2);
\end{tikzpicture}
    }
    \vspace{-0.4cm}
    \caption{
      A Sequential Treatment Scenario
    }
  \label{fig:front-door}
  \end{figure}

\begin{example}[Sequential Treatment Scenario]
  In the model depicted in Fig.~\ref{fig:front-door}, the $A$ variables
  represent treatments and the $Y$ variables represent outcomes. This model may
  be applicable if the initial treatment $A_1$ is selected by the subject -- and
  is therefore confounded with the outcomes through the subject's unobserved
  traits $H$ -- but the second treatment $A_2$ is selected solely on the basis
  of the first-stage outcome, $Y_1$.
  
  We may be interested in $P(Y_2(a_1, a_2))$ -- the distribution of the second
  stage outcome under intervention on both treatments. This distribution is not
  identified. However, since
  $P(A_1(a_2) = a_1, Y_2(a_2) = y_2)$ is identified as
  $\sum_{y_1} \frac{P(a_1, y_1, a_2, y_2)}{P(a_2 \mid y_1)}$,
  Corollary~\ref{cor:fix-subset} yields the 
  bounds: \[
      \sum_{y_1} \frac{P(a_1, y_1, a_2, y_2)}{P(a_2 \mid y_1)}
      \le
   P(Y(a_1, a_2) = y_2) \]
    \[
      1 - \sum_{y_1, \tilde y_2 \neq y_2}
      \frac{P(a_1, y_1, a_2, \tilde y_2)}{P(a_2 \mid y_1)} \ge
      P(Y(a_1, a_2) = y_2).
    \]
\end{example}

\subsection*{Bounds Via Multiple Cross-World Implications}
\label{sec:multi-world}

In this section, we show how information from multiple cross-world implications
may be used to obtain bounds. We begin by describing a partition of the event of
interest, where the partition is defined by cross-world potential outcomes. We
then develop a method for bounding the density of the cross-world events in the
partition, and aggregate the bounds. This section generalizes the procedure used
to obtain sharp bounds for the binary IV model in Section~\ref{sec:sharp-iv}.

\subsubsection*{Partitioning the Event of Interest}

We denote our event of interest as ${\bf Y(a_1) = y}$.  We are interested in
providing a lower bound for the non-identified probability $P(\bf Y(a_1) = y)$.
We begin by defining a partition of the event to interest. We will work with
these partition sets for the remainder of the section.

\begin{proposition}[Partition Sets]
  \label{prop:partition}
  Let $\bf Z$ be causally irrelevant to $\bf Y$ given $\bf A$. Assume $\bf Z, A$
  are discrete, and take levels $\bf z_1, \cdots, z_M$ and $\bf a_1, \cdots,
  a_N$ respectively.

  Then the following events are a partition of $\bf Y(a_1) = y$:
  {\small
  \begin{align}
    \label{eq:partition1}
    &\bf Y(a_1) = y \land \forall z \big({\bf A}(z) \neq a_1 \big)\\
    \label{eq:partition2}
    &\bf A(z_1) = a_1 \land {\bf Y}(z_1) = y
  \end{align}}
  and, for ${\bf k} = 2, \cdots, N$,
  {\small \begin{align}
  \label{eq:partition3}
    &{\bf A(z_1) = a_k \land \exists z \big(A(z) = a_1 \land Y(z) = y \big)}.
  \end{align}}
\end{proposition}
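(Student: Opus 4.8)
The plan is to verify the two defining properties of a partition — pairwise disjointness and joint exhaustiveness — by a direct case analysis on the value of $\mathbf{A}(\mathbf{z}_1)$, leaning on causal irrelevance and generalized consistency throughout. The one fact I would isolate at the outset is the following strengthening of Proposition~\ref{prop:cross-world}: \emph{for every level $\mathbf{z}$, the event $\mathbf{A}(\mathbf{z}) = \mathbf{a}_1$ implies $\mathbf{Y}(\mathbf{z}) = \mathbf{Y}(\mathbf{a}_1)$ as random variables.} This is immediate: generalized consistency \eqref{eqn:consist} (taking $\mathbf{B} = \mathbf{A}$ and intervening on $\mathbf{Z}$ set to $\mathbf{z}$) gives $\mathbf{A}(\mathbf{z}) = \mathbf{a}_1 \Rightarrow \mathbf{Y}(\mathbf{z}, \mathbf{a}_1) = \mathbf{Y}(\mathbf{z})$, and causal irrelevance (Proposition~\ref{prop:irrelevance}) gives $\mathbf{Y}(\mathbf{z}, \mathbf{a}_1) = \mathbf{Y}(\mathbf{a}_1)$ outright. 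Conditional on $\mathbf{A}(\mathbf{z}) = \mathbf{a}_1$ this makes $\{\mathbf{Y}(\mathbf{z}) = \mathbf{y}\}$ and $\{\mathbf{Y}(\mathbf{a}_1) = \mathbf{y}\}$ the same event; the $\Rightarrow$ half is exactly Proposition~\ref{prop:cross-world}, while the $\Leftarrow$ half is the part that does the real work below and is, crucially, not supplied by Proposition~\ref{prop:cross-world} alone.

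For exhaustiveness I would fix an arbitrary realization (i.e.\ a setting of the exogenous noises) lying in $\mathbf{Y}(\mathbf{a}_1) = \mathbf{y}$ and let $\mathbf{a}_k$ be the value $\mathbf{A}(\mathbf{z}_1)$ takes there. If $k = 1$, the workhorse fact with $\mathbf{z} = \mathbf{z}_1$ yields $\mathbf{Y}(\mathbf{z}_1) = \mathbf{y}$, so the realization lies in \eqref{eq:partition2}. If $k \ge 2$, split on whether some level $\mathbf{z}$ has $\mathbf{A}(\mathbf{z}) = \mathbf{a}_1$: if such a $\mathbf{z}$ exists, the workhorse fact gives $\mathbf{Y}(\mathbf{z}) = \mathbf{y}$ for that $\mathbf{z}$, so $\mathbf{A}(\mathbf{z}_1) = \mathbf{a}_k \wedge \exists \mathbf{z}\,(\mathbf{A}(\mathbf{z}) = \mathbf{a}_1 \wedge \mathbf{Y}(\mathbf{z}) = \mathbf{y})$ holds, i.e.\ \eqref{eq:partition3} for this $k$; otherwise $\forall \mathbf{z}\,(\mathbf{A}(\mathbf{z}) \neq \mathbf{a}_1)$ holds, and since $\mathbf{Y}(\mathbf{a}_1) = \mathbf{y}$ we land in \eqref{eq:partition1}. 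Conversely, each of the listed events is a subset of $\mathbf{Y}(\mathbf{a}_1) = \mathbf{y}$: for \eqref{eq:partition1} this is listed as a conjunct; for \eqref{eq:partition2} and \eqref{eq:partition3} it is precisely Proposition~\ref{prop:cross-world} (applied to the witnessing level $\mathbf{z}$ in the case of \eqref{eq:partition3}). Hence the union of the $N+1$ events equals $\mathbf{Y}(\mathbf{a}_1) = \mathbf{y}$.

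For disjointness I would read off the constraint each event imposes on $\mathbf{A}(\mathbf{z}_1)$: \eqref{eq:partition2} forces it to equal $\mathbf{a}_1$; each \eqref{eq:partition3} forces it to equal some $\mathbf{a}_k$ with $k \ge 2$; and \eqref{eq:partition1} forces $\mathbf{A}(\mathbf{z}) \neq \mathbf{a}_1$ for all $\mathbf{z}$, in particular $\mathbf{A}(\mathbf{z}_1) \neq \mathbf{a}_1$. So \eqref{eq:partition2} is disjoint from everything else, and distinct instances of \eqref{eq:partition3} are disjoint from one another. Finally \eqref{eq:partition1} is disjoint from every \eqref{eq:partition3} because the latter entails $\exists \mathbf{z}\,(\mathbf{A}(\mathbf{z}) = \mathbf{a}_1)$, which contradicts the $\forall \mathbf{z}\,(\mathbf{A}(\mathbf{z}) \neq \mathbf{a}_1)$ clause of the former. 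I do not expect a genuinely hard step here; the points to be careful about are that Proposition~\ref{prop:cross-world} gives only one direction of the cross-world equivalence, so exhaustiveness really needs the sharper equality $\mathbf{Y}(\mathbf{z}) = \mathbf{Y}(\mathbf{a}_1)$ derived from consistency plus Proposition~\ref{prop:irrelevance}, and that the single aggregated event \eqref{eq:partition1} (rather than one event per residual value of $\mathbf{A}(\mathbf{z}_1)$) is the right bookkeeping — legitimate because $\forall \mathbf{z}\,(\mathbf{A}(\mathbf{z}) \neq \mathbf{a}_1)$ already pins $\mathbf{A}(\mathbf{z}_1)$ into $\{\mathbf{a}_2, \dots, \mathbf{a}_N\}$. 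Discreteness/finiteness of the level sets is used only to make the quantifiers over $\mathbf{z}$ and the index range $k = 2, \dots, N$ well-posed.
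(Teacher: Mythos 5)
Your proof is correct and follows essentially the same route as the paper's: verify that each listed event is contained in $\mathbf{Y}(\mathbf{a}_1)=\mathbf{y}$ via the cross-world implication, that the events are pairwise disjoint because they pin $\mathbf{A}(\mathbf{z}_1)$ (or the existence of a $\mathbf{z}$ with $\mathbf{A}(\mathbf{z})=\mathbf{a}_1$) to incompatible values, and that they are exhaustive by a case split on the value of $\mathbf{A}(\mathbf{z}_1)$. Your explicit isolation of the two-sided fact $\mathbf{A}(\mathbf{z})=\mathbf{a}_1 \Rightarrow \mathbf{Y}(\mathbf{z})=\mathbf{Y}(\mathbf{a}_1)$ (consistency plus causal irrelevance) is a welcome clarification of a step the paper's exhaustiveness argument uses only implicitly.
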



We now develop lower bounds on the density of each of the partition events,
which we can then use to lower bound the density of the target event $\bf Y(a_1)
= y$.

\subsubsection*{Bounding Partition-Set Densities}

Event (\ref{eq:partition2}) is a single world event with identified density, so
there is no need to find a lower bound. Subjects that experience event
(\ref{eq:partition1}) will never experience $\bf Y(z) = y \land A(z) = a_1$
under any intervention $\bf Z = z$ with an identified distribution. Because our
strategy uses information from identified distributions to bound unidentified
densities, we cannot provide bounds on the density of this partition event.

We now turn our attention to events of the form of (\ref{eq:partition3}). We
let $E^k$ denote the event of the form of (\ref{eq:partition3}) where the first
term is $\bf A(z_1) = a_k$. Then we note $E^k$ can be represented as the
disjunction $\bigvee_{j = 1}^M \gamma_j^k$, where $\gamma_j^k$ denotes the
event:
{\small \begin{align}
  \label{eq:disjunction}
  \gamma_j^k \triangleq \bf A(z_1) = a_k \land A(z_j) = a_1 \land Y(z_j) = y.
\end{align}}
Because $\gamma_j^k \subseteq E^k$, we know $\gamma_j^k \implies E^k$. It
follows from Proposition \ref{prop:manski} that
{\small \begin{align}
  \label{eq:partition-set-lower}
  P(E^k) \ge \max_j P(\gamma_j^k).
\end{align}}
Unfortunately $P(\gamma_j^k)$ is also not point-identified, and must be bounded
from below itself.

The event $\gamma_j^k$ conjoins statements about potential outcomes under two
different interventions. Its density is the portion of the population who would
experience $\bf A(z_1) = a_k$ under $\bf Z = z_1$, and $\bf A(z_j) = a_1 \land
Y(z_j) = y$ under $\bf Z = z_j$. We know the exact proportion of the population
who would experience either, because $\bf Z$ is a generalized instrument, but we
do not know the exact portion of the population that would experience both.

To address this problem, we first consider the problem of bounding the density
of an event conjoining potential outcomes under two different interventions in
general terms, in the following Proposition. This result can then be directly
applied to lower bound the density of $\gamma_j^k$.

\begin{proposition}[Cross-World Lower Bounds]
  \label{prop:lower-bounds}
  Let $\psi_{\bf c}(E)$ represent the disjunction of all outcomes in the sample
  space under intervention $\bf C = c$ that do not contradict the event $E$,
  such that $E \implies \psi_{\bf c}(E)$.

  Let $E_{\bf x}$ be any event that implies $\bf X(a) = x$, and $E_{\bf y}$ be
  any event that implies $\bf Y(b) = y$. Then $P(\bf X(a) = x, Y(b) = y)$ is
  bounded from below by each of:
  {\small \[P(E_{\bf x}) - P(\psi_{\bf b}(E_{\bf x}), \bf Y(\bf b) \neq y) \]
  \[P(E_{\bf y}) - P(\psi_{\bf a}(E_{\bf y}), \bf X(\bf a) \neq x). \]}
\end{proposition}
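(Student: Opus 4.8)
The plan is to prove each of the two lower bounds symmetrically; I will write out the argument for the first, $P(\mathbf{X(a)=x, Y(b)=y}) \ge P(E_{\mathbf x}) - P(\psi_{\mathbf b}(E_{\mathbf x}), \mathbf{Y(b)\neq y})$, and note that the second follows by swapping the roles of $(\mathbf{X,a,x})$ and $(\mathbf{Y,b,y})$. The starting point is the elementary identity for any two events $E$ and $F$, namely $P(E) = P(E, F) + P(E, \neg F)$, which I would apply with $E = E_{\mathbf x}$ and $F = \big(\mathbf{Y(b) = y}\big)$, giving
\[
  P(E_{\mathbf x}) - P(E_{\mathbf x}, \mathbf{Y(b) \neq y}) = P(E_{\mathbf x}, \mathbf{Y(b) = y}).
\]
This mirrors equation~(\ref{eq:e1}) in the IV derivation.

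Next I would bound the subtracted term from above. By hypothesis $E_{\mathbf x} \implies \psi_{\mathbf b}(E_{\mathbf x})$, so the event $E_{\mathbf x} \land \mathbf{Y(b)\neq y}$ is contained in the event $\psi_{\mathbf b}(E_{\mathbf x}) \land \mathbf{Y(b)\neq y}$; applying Proposition~\ref{prop:manski} (monotonicity of probability under implication) gives $P(E_{\mathbf x}, \mathbf{Y(b)\neq y}) \le P(\psi_{\mathbf b}(E_{\mathbf x}), \mathbf{Y(b)\neq y})$. Substituting this into the identity above yields
\[
  P(E_{\mathbf x}) - P(\psi_{\mathbf b}(E_{\mathbf x}), \mathbf{Y(b)\neq y}) \le P(E_{\mathbf x}, \mathbf{Y(b) = y}).
\]
Finally, since $E_{\mathbf x} \implies \big(\mathbf{X(a) = x}\big)$, the event $E_{\mathbf x} \land \mathbf{Y(b) = y}$ implies $\mathbf{X(a)=x} \land \mathbf{Y(b)=y}$, so one more application of Proposition~\ref{prop:manski} gives $P(E_{\mathbf x}, \mathbf{Y(b)=y}) \le P(\mathbf{X(a)=x, Y(b)=y})$. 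Chaining the last two inequalities delivers the claimed bound.

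The only subtle point — and the step I would be most careful about — is the well-definedness of $\psi_{\mathbf b}(E_{\mathbf x})$ and the claim that $E_{\mathbf x}$ really does entail it. The definition says $\psi_{\mathbf b}(E)$ is the disjunction of all full outcome assignments under $\mathbf{C = b}$ that do not contradict $E$; I would want to make explicit that ``$E$ rules out exactly the outcomes under $\mathbf{C=b}$ that contradict it,'' so that every sample point consistent with $E$ lies in some disjunct of $\psi_{\mathbf b}(E)$, which is precisely the statement $E \implies \psi_{\mathbf b}(E)$ assumed in the proposition. Given that this entailment is part of the hypothesis rather than something to be derived, the proof is essentially just the two-line probability identity plus three invocations of Proposition~\ref{prop:manski}, and the main work is bookkeeping the implications in the right direction; the symmetric bound is obtained verbatim with $E_{\mathbf y}$, $\psi_{\mathbf a}$, and the event $\mathbf{X(a)\neq x}$.
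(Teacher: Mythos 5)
Your proof is correct and follows essentially the same route as the paper's: the decomposition $P(E_{\bf x}) = P(E_{\bf x}, {\bf Y(b)=y}) + P(E_{\bf x}, {\bf Y(b)\neq y})$, the upper bound on the second term via $E_{\bf x}\implies \psi_{\bf b}(E_{\bf x})$, and the final monotonicity step using $E_{\bf x}\implies {\bf X(a)=x}$, with the second bound obtained by symmetry. Your explicit invocations of Proposition~\ref{prop:manski} and the remark on the well-definedness of $\psi_{\bf b}$ are just slightly more careful bookkeeping of what the paper leaves implicit.
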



Proposition \ref{prop:lower-bounds} is useful because, in each of the bounds
provided, each of the densities involved are in terms of events under a single
intervention. If densities under those interventions are identified, the bounds
can be calculated exactly.

We return to our goal of bounding $P(\gamma^k_j)$ from below. In the case of
$\gamma^k_j$, the two interventions we are interested in are on the same set of
variables, $\bf Z$. As described above, we are interested in the proportion of
patients who experience $\bf A(z_1) = a_k$ under the intervention $\bf Z = z_1$,
and $\bf A(z_j) = a_1 \land Y(z_j) = y$ under the intervention $\bf Z = z_j$.
Substituting these values for $\bf X(a) = x$ and $\bf Y(b) = y$ into Proposition
\ref{prop:lower-bounds} immediately yields the following Corollary.

\begin{corollary}[Lower bounds on $P(\gamma^k_j)$]
  \label{cor:lower-bounds}
  Let $E_1$ be an event under intervention ${\bf Z = z_1}$ that entails ${\bf
    A(z_1) = a_k}$, and $E_2$ be event under intervention ${\bf Z = z_j}$ that
  entails $\bf A(z_j) = a_1 \land Y(z_j) = y$.

  Then $P(\bf A(z_1) = a_k \land A(z_j) = a_1 \land Y(z_j) = y)$ is bounded from
  below by each of:
  {\small \[
    P(E_1) - P(\psi_{\bf z_j}(E_1) \land \neg \bf \big( A(z_j) = a_1 \land Y(z_j) = y \big))
  \]
  \[
    P(E_2) - P(\psi_{\bf z_1}(E_2) \land \bf A(z_1) \neq a_k).
  \]}
\end{corollary}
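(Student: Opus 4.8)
The plan is to obtain the Corollary as a direct instantiation of Proposition~\ref{prop:lower-bounds}. First I would set up a correspondence between the abstract events in that Proposition and the concrete events appearing in $\gamma^k_j$. I would take the single-world event $\bf X(a) = x$ of the Proposition to be $\bf A(z_1) = a_k$ --- that is, identify $\bf X$ with $\bf A$, the intervention set $\bf a$ with $\bf Z$ set to $\bf z_1$, and the value $\bf x$ with $\bf a_k$. I would take $\bf Y(b) = y$ to be the conjunctive event $\bf A(z_j) = a_1 \land Y(z_j) = y$, viewed as a single multivariable counterfactual event $\bf W(z_j) = w$ with $\bf W = A \cup Y$ and $\bf w$ assigning $\bf a_1$ to the $\bf A$-coordinates and $y$ to the $Y$-coordinate; here the intervention set $\bf b$ is again $\bf Z$, but set to $\bf z_j$. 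Crucially, the Proposition imposes no requirement that the two interventions act on disjoint (or distinct) variable sets, so having both be interventions on $\bf Z$, with differing values $\bf z_1$ and $\bf z_j$, is permitted.

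Next I would verify the hypotheses of the Proposition under this identification. The Proposition requires events $E_{\bf x}$ with $E_{\bf x} \implies \bf X(a) = x$ and $E_{\bf y}$ with $E_{\bf y} \implies \bf Y(b) = y$. These are exactly the conditions placed on $E_1$ and $E_2$ in the Corollary: $E_1$ is an event under $\bf Z = z_1$ that entails $\bf A(z_1) = a_k$, and $E_2$ is an event under $\bf Z = z_j$ that entails $\bf A(z_j) = a_1 \land Y(z_j) = y$. So I would set $E_{\bf x} = E_1$ and $E_{\bf y} = E_2$. The operator $\psi_{\bf c}(\cdot)$ appearing in the Corollary --- $\psi_{\bf z_j}(E_1)$ and $\psi_{\bf z_1}(E_2)$ --- is precisely the $\psi_{\bf c}(\cdot)$ of the Proposition instantiated at $\bf c = z_j$ and $\bf c = z_1$ respectively, so no reinterpretation is needed there.

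With the correspondence in place, I would simply read off the two bounds. The Proposition's first bound $P(E_{\bf x}) - P(\psi_{\bf b}(E_{\bf x}), \bf Y(b) \neq y)$ becomes $P(E_1) - P(\psi_{\bf z_j}(E_1) \land \neg \bf ( A(z_j) = a_1 \land Y(z_j) = y ))$, since $\bf Y(b) \neq y$ is exactly the negation of the multivariable event $\bf A(z_j) = a_1 \land Y(z_j) = y$. The Proposition's second bound $P(E_{\bf y}) - P(\psi_{\bf a}(E_{\bf y}), \bf X(a) \neq x)$ becomes $P(E_2) - P(\psi_{\bf z_1}(E_2) \land \bf A(z_1) \neq a_k)$. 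These are exactly the two lower bounds claimed for $P(\bf A(z_1) = a_k \land A(z_j) = a_1 \land Y(z_j) = y) = P(\gamma^k_j)$, completing the proof.

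The argument is thus essentially bookkeeping. The only points requiring care are (a) recognizing that a conjunction of single-variable counterfactual events sharing a common intervention is itself a multivariable counterfactual event of the form handled by Proposition~\ref{prop:lower-bounds}, and (b) confirming that the Proposition does not tacitly assume the two interventions are on different variable sets --- here both are interventions on $\bf Z$, only with different values. I do not anticipate any genuine obstacle beyond making these identifications precise.
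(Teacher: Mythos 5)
Your proof is correct and is exactly the paper's argument: the text introducing Corollary~\ref{cor:lower-bounds} states that it follows by substituting ${\bf A(z_1) = a_k}$ for ${\bf X(a) = x}$ and ${\bf A(z_j) = a_1 \land Y(z_j) = y}$ for ${\bf Y(b) = y}$ into Proposition~\ref{prop:lower-bounds}, which is the instantiation you carry out. Your additional bookkeeping (treating the conjunction as a single multivariable counterfactual event and noting that both interventions may be on the same set $\bf Z$) is a faithful, slightly more explicit rendering of the same step.
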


With this result in hand, we can we can modify (\ref{eq:partition-set-lower}) to
obtain the following bound on $P(E^k)$ in terms of the observed data law. Let
$\xi(\cdot)$ represent the set of lower bounds on the density $P(\cdot)$
obtained through Corollary \ref{cor:lower-bounds} for all possible values of
$E_1$ and $E_2$. Then 
\begin{align}
  \label{eq:partition-lb}
  {\small P(E^k) \ge \max_j \big( \max \xi(\gamma_j^k) \big)}.
\end{align}
Recalling that $P({\bf Y(a_1) = y})$ can be bounded from below by the sum of
lower bounds on densities of its partition sets, we obtain the lower
bound
{\small
  \[
    P({\bf A(z_1) = a_1, Y(z_1) = y}) +
    \sum_{k = 2}^N \max_j \big( \max \xi(\gamma_j^k) \big),
  \]
}
where the first term corresponds to the density of event (\ref{eq:partition2})
and the second term is a sum is over lower bounds on the densities of the
events of the form of (\ref{eq:partition3}).
{\small
\begin{algorithm}[h]
	\caption{Lower Bounds on $P({\bf Y(a_1) = y})$}
		\label{alg:bounds}
	\begin{algorithmic}[1]
		\Statex \hspace{-6.5mm} \textbf{Input:} event $\bf Y(a_1) = y$
    \Statex \hspace{4mm} generalized instrument $\bf Z$
		\Statex \hspace{-6.5mm} \textbf{Output:} bounds on $P(\bf Y(a_1) = y)$
    \State Bounds = \{\}
    \State \textbf{For} $k  = 2, \dots, N$:
    \State \hspace{3mm} KBounds = \{\}
    \State \hspace{3mm} \textbf{For} $j = 1, \dots, M$:
    \State \hspace{6mm} \textbf{For} $E_1$ in $\{E_1 \implies \bf Y(z_1) = a_k \}$
    \State \hspace{9mm} induce bound by $E_1$ through Corollary \ref{cor:lower-bounds}
    \State \hspace{9mm} KBounds.add(bound)
    \State \hspace{6mm} \textbf{For} $E_2$ in $\{E_2 \implies \bf Y(z_j) = y \land A(z_j) = a_1 \}$
    \State \hspace{9mm} induce bound by $E_2$ through Corollary \ref{cor:lower-bounds}
    \State \hspace{9mm} KBounds.add(bound)
    \State \hspace{3mm} Bounds.add($\max$(KBounds))
    \State $\bf P(Y(z_1) = y_1, A(z_1) = a_1)$ + sum(Bounds)
	\end{algorithmic}
\end{algorithm}
} 

Algorithm \ref{alg:bounds}
summarizes how the results described in this section can be used to calculate
these bounds.
\section{GENERALIZED INSTRUMENTAL INEQUALITIES}
\label{sec:inequalities}

In this section, we develop generalized instrumental inequalities and show that
the IV inequalities, and Bonet's inequalities \cite{bonet01instrumental}, are
special cases. To begin, we bound the sum of probabilities of events in terms of
the size of the largest subset thereof that is made up of compatible events. 

\begin{proposition}
  \label{prop:upper-bound}
  Let $E_1, \cdots, E_N$  be events under arbitrary interventions such that at
  most $k$ of the events are compatible. Then
  {\small \[\sum_{i = 1}^N P(E_i) \le k.\]}
\end{proposition}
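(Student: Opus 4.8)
The plan is to reduce the probabilistic inequality to a pointwise counting argument on the underlying noise space. Recall from the preliminaries that every counterfactual random variable $V(\mathbf a)$ is a deterministic function of the exogenous noise $\{\epsilon_V : V \in \mathbf V\}$ obtained by recursive substitution; intervening on different sets merely selects different such functions, but they all live over the single probability space $\Omega$ on which the $\epsilon_V$'s are defined. Consequently each $E_i$ — a Boolean combination of counterfactual value assignments, possibly across different interventional worlds — corresponds to a measurable set $S_i \subseteq \Omega$, and $P(E_i) = P(S_i)$, with all $S_i$ carved out of the same $\Omega$.

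First I would establish the key combinatorial fact: for every $\omega \in \Omega$, at most $k$ of the sets $S_i$ contain $\omega$. Suppose not, so that some $\omega$ lies in $S_{i_1}, \dots, S_{i_{k+1}}$. For any two indices $i_p, i_q$ among these, the single noise draw $\omega$ realizes both $E_{i_p}$ and $E_{i_q}$ simultaneously, so $E_{i_p}$ does not imply $\neg E_{i_q}$ under the model; that is, $E_{i_p}$ and $E_{i_q}$ are compatible. Hence the sub-collection $\{E_{i_1}, \dots, E_{i_{k+1}}\}$ is pairwise non-contradictory, i.e.\ compatible, and has size $k+1$, contradicting the hypothesis that at most $k$ of the events are compatible.

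Given this pointwise bound, the conclusion follows by linearity of expectation:
$\sum_{i=1}^N P(E_i) = \sum_{i=1}^N \mathbb{E}[\mathbf{1}_{S_i}] = \mathbb{E}\big[\sum_{i=1}^N \mathbf{1}_{S_i}\big] \le \mathbb{E}[k] = k$, where the inequality uses $\sum_{i=1}^N \mathbf{1}_{S_i}(\omega) \le k$ for all $\omega$.

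The step I expect to require the most care is the middle one: justifying that a single common noise realization certifies compatibility of the \emph{whole} sub-collection, not merely of one pair at a time. This goes through because the paper's notion of a compatible collection is exactly pairwise non-contradiction, and the relation ``$E \implies \neg E'$'' fails as soon as one unit experiences both $E$ and $E'$. I would also note explicitly that this argument uses the model-theoretic meaning of contradiction directly and makes no appeal to the graphical sufficient criterion of Proposition~\ref{prop:contradiction}, so it remains valid even when context-specific exclusion restrictions are present.
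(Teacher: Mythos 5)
Your proof is correct, but it takes a different route from the paper's. The paper argues via an inclusion--exclusion-style identity, writing $\sum_{i=1}^N P(E_i) = P(\cup_{i=1}^N E_i) + \sum_{i<j} P(E_i \cap E_j)$, bounding the union by $1$ and claiming $\sum_{i<j} P(E_i \cap E_j) \le k-1$ because no value of $\epsilon_{\bf V}$ can realize too many pairwise intersections. You instead go straight to the pointwise statement: every noise realization $\omega$ lies in at most $k$ of the sets $S_i$ (else the $k+1$ events it realizes would be pairwise non-contradictory, hence a compatible subcollection of size $k+1$), and then integrate. Your argument is the more robust of the two: the paper's displayed identity is exact only for $N=2$ (for $N\ge 3$ the correct relation is an inequality, though in the direction the paper needs), and the intermediate claim $\sum_{i<j} P(E_i\cap E_j)\le k-1$ fails in general --- a point in $m\le k$ of the events lies in $\binom{m}{2}$ of the pairwise intersections, which exceeds $k-1$ once $k\ge 3$ (e.g.\ three identical events of probability one give $\sum_{i<j}P(E_i\cap E_j)=3>k-1=2$). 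Your pointwise counting bypasses these issues entirely and yields the stated bound directly via linearity of expectation. You also correctly isolate and justify the one step that needs care, namely that a single common noise realization certifies compatibility of the whole subcollection because compatibility is defined as pairwise non-contradiction; this is exactly the model-theoretic fact the paper appeals to in its informal discussion following the proposition.
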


This result is a consequence of the fact that by construction, no value of
$\epsilon_{\bf V}$ can lead to contradictory events. If such a value did exist,
observing one of the events leaves open the possibility that $\epsilon_{\bf V}$
takes that value, in which case we would observe the other event under the
appropriate intervention, and the events would not be contradictory. It follows
that if we are adding the densities of events of which at most $k$ are
compatible, no set in the domain of $\epsilon_{\bf V}$ may have its measure
counted more than $k$ times.

We make use of this result, in combination with our existing results about
causal irrelevance, to obtain the following class of inequality constraints.

\begin{corollary}[Generalized Instrumental Inequalities]
  \label{cor:generalized-iv-ie}
  Let $\bf Z$ be causally irrelevant to $\bf Y$ given $\bf A$, and let
  $\mathcal S$ be any set of triples $({\bf z, a, y})$ which represent levels of
  $\bf Z, A, Y$. Then
  {\small \[
    \sum_{({\bf z, a, y}) \in \mathcal S} P({\bf A(z) = a, Y(z) = y}) \le
    \Phi(\mathcal S)
  \]}
  where
{\small \begin{align*}
  \Phi(\mathcal S) = \max
  \big\{
  \big\lvert
  \mathcal Q
  \big\rvert
  \mid
  {\mathcal Q \subseteq S} \land
  \forall ({\bf z, a, y}), ({\bf z', a', y'}) \in {\mathcal Q}~
  \\\neg\big(
      ({\bf z = z' \land a \neq a'}) \lor ({\bf a = a' \land y \neq y'})
    \big)
    \big\}.
\end{align*}}
\end{corollary}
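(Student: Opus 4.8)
The plan is to reduce the statement to Proposition~\ref{prop:upper-bound}. For each triple $t = ({\bf z},{\bf a},{\bf y}) \in \mathcal S$ I would set $E_t := \big({\bf A}({\bf z}) = {\bf a} \land {\bf Y}({\bf z}) = {\bf y}\big)$, so that the left-hand side of the claimed inequality is precisely $\sum_{t \in \mathcal S} P(E_t)$. By Proposition~\ref{prop:upper-bound} it then suffices to show that at most $\Phi(\mathcal S)$ of the events $\{E_t : t \in \mathcal S\}$ are mutually compatible. Since compatibility is a pairwise notion, I would establish this by showing that any $\mathcal Q \subseteq \mathcal S$ indexing a compatible subfamily necessarily satisfies the constraint in the definition of $\Phi$, and therefore has $|\mathcal Q| \le \Phi(\mathcal S)$ by definition of the maximum.

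The heart of the argument is thus a pairwise claim: if two triples $({\bf z},{\bf a},{\bf y}),({\bf z}',{\bf a}',{\bf y}') \in \mathcal S$ violate the $\Phi$-constraint --- i.e.\ $({\bf z} = {\bf z}' \land {\bf a} \neq {\bf a}')$ or $({\bf a} = {\bf a}' \land {\bf y} \neq {\bf y}')$ --- then $E_{({\bf z},{\bf a},{\bf y})}$ and $E_{({\bf z}',{\bf a}',{\bf y}')}$ are contradictory, so a compatible subfamily can contain neither kind of pair. I would split into two cases. In the first ($\bf z = z'$, $\bf a \neq a'$), pick a coordinate $A_i \in {\bf A}$ on which $\bf a$ and $\bf a'$ disagree; for any fixed value of the exogenous noise $\{\epsilon_V : V \in {\bf V}\}$ the recursive substitution in (\ref{eq:po}) assigns $A_i({\bf z}) = A_i({\bf z}')$ a single value, so the two events cannot co-occur. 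In the second ($\bf a = a'$, $\bf y \neq y'$), I would apply Proposition~\ref{prop:cross-world} to each event: $E_{({\bf z},{\bf a},{\bf y})} \implies {\bf Y}({\bf a}) = {\bf y}$ and $E_{({\bf z}',{\bf a}',{\bf y}')} \implies {\bf Y}({\bf a}') = {\bf Y}({\bf a}) = {\bf y}'$, and since ${\bf Y}({\bf a})$ is single-valued for fixed noise and ${\bf y} \neq {\bf y}'$ these two events are contradictory.

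Given the pairwise claim, every $\mathcal Q \subseteq \mathcal S$ indexing a compatible subfamily avoids both kinds of bad pair, hence lies in the feasible set of the maximization defining $\Phi$, so $|\mathcal Q| \le \Phi(\mathcal S)$; at most $\Phi(\mathcal S)$ of the $E_t$ are therefore compatible, and Proposition~\ref{prop:upper-bound} applied with $k = \Phi(\mathcal S)$ yields $\sum_{({\bf z},{\bf a},{\bf y}) \in \mathcal S} P({\bf A(z) = a, Y(z) = y}) \le \Phi(\mathcal S)$. Specializing $\mathcal S$ to the triples relevant to the binary IV model should then recover the classical IV and Bonet inequalities.

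The closest thing to an obstacle --- really a point requiring care --- is the second case of the pairwise argument: one must check that Proposition~\ref{prop:cross-world} applies across the two \emph{different} interventions ${\bf z}$ and ${\bf z}'$, which it does, because causal irrelevance of ${\bf Z}$ to ${\bf Y}$ given ${\bf A}$ is a statement about directed paths in ${\cal G}$ and so is independent of the value assigned to ${\bf Z}$. One should also read the vector (in)equalities coordinatewise --- ${\bf a} \neq {\bf a}'$ as disagreement in some coordinate, ${\bf a} = {\bf a}'$ as agreement in every coordinate --- but both cases go through working on a single coordinate. Finally, only the direction ``bad pair $\Rightarrow$ contradictory'' is needed for the inequality; the converse, which would show $\Phi(\mathcal S)$ is exactly the size of the largest compatible subfamily and hence that the bound cannot be sharpened by this method, would rely on the necessity of the graphical contradiction criterion and is not required here.
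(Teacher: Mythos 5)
Your proposal is correct and follows essentially the same route as the paper: show that any pair of triples violating the $\Phi$-constraint yields contradictory events, conclude that at most $\Phi(\mathcal S)$ of the events $\{{\bf A(z)=a \land Y(z)=y}\}$ are mutually compatible, and apply Proposition~\ref{prop:upper-bound} with $k = \Phi(\mathcal S)$. The only difference is cosmetic: you verify the pairwise contradictions directly from the structural equations and Proposition~\ref{prop:cross-world}, whereas the paper cites its graphical criterion (Proposition~\ref{prop:contradiction}); your direct check is if anything cleaner, and it avoids the spurious negation that appears in the contradiction condition as stated in the paper's appendix proof.
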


This result makes use of the fact that by Proposition \ref{prop:contradiction},
if $\bf Z$ is causally irrelevant to $\bf Y$ given $\bf A$ and $({\bf z = z'
  \land a \neq a'}) \lor ({\bf a = a' \land y \neq y'})$, then ${\bf A(z) = a
  \land Y(z) = y}$ and ${\bf A(z') = a' \land Y(z') = y'}$ are contradictory.
$\Phi(\mathcal S)$ can therefore be interpreted as the size of largest
compatible subset of $\mathcal S$.

The IV inequalities derived in \cite{iv-inequality-constraints}, which can be
written $\forall a \sum_y \max_z P(Y(z), A(z)) \le 1$, are a special case of the
generalized instrumental inequalities with $k = 1$. For each selection of $a$,
the sum is over densities of events with different values of $y$, rendering
them pairwise contradictory. We now review the inequality derived in
\cite{bonet01instrumental}.

\begin{example}[Bonet's Inequalities]
  Bonet \cite{bonet01instrumental} presents the following constraint for the IV model, where
  treatment and outcome are binary and the instrument is ternary:
  \begin{align*}
    P(a_1, y_2& \mid z_2) + P(a_1, y_1 \mid z_3) + P(a_1, y_2 \mid z_1) \\
    +~&P(a_2, y_2 \mid z_2) + P(a_2, y_1 \mid z_1) \leq 2
  \end{align*}
  These densities are respectively equal to the densities of the following
  events, through the fact that densities under intervention on variables with
  no parent are identified as the conditional distribution:
  {\small \begin{align}
    \label{eq:1}
    A(z_2) = a_1 \land Y(z_2) &= y_2\\
    \label{eq:2}
    A(z_3) = a_1 \land Y(z_3) &= y_1\\
    \label{eq:3}
    A(z_1) = a_1 \land Y(z_1) &= y_2\\
    \label{eq:4}
    A(z_2) = a_2 \land Y(z_2) &= y_2\\
    \label{eq:5}
    A(z_1) = a_2 \land Y(z_1) &= y_1.
  \end{align}}
It can easily be confirmed that no subset of size 3 or greater is mutually
compatible. For example, event (\ref{eq:1}) is compatible with events
(\ref{eq:3}) and (\ref{eq:5}), but these are incompatible with each other, due
to $Z$ taking the same value in both but $A$ taking a different value in each. The same
pattern follows for all events; each event is compatible with two others which
in turn are not compatible with each other.

It follows from Corollary \ref{cor:generalized-iv-ie} that the sum of the
densities of these events must be bounded from above by $2$.
\end{example}

\section{EXAMPLE APPLICATIONS}
\label{sec:examples}

In this section, we derive bounds and inequality constraints for the ADMGs
presented in Fig. \ref{fig:graphs} using the results presented in Sections
\ref{sec:bounds} and \ref{sec:inequalities}. We are not aware of any existing
methods that can obtain the bounds presented below. Code used to obtain these
results, as well as a general implementation of the methods described in this
paper, is publicly available
\footnote{{\small \url{https://noamfinkelste.in/partial-id}}}.

Due to space constraints, we denote the identified distribution under
intervention on $Z = z$ as $P_z(\cdot)$. In addition, we do not consider more
complicated scenarios, e.g. involving multiple instruments $\bf Z$ and
treatments $\bf A$, instruments with challenging identifying functionals, or
non-binary variables. However, bounds and constraints in such scenarios may be
obtained using our software.

\begin{figure}[t]
  \centering
  \small{

    \begin{tikzpicture}[>=stealth, node distance=1.0cm]
      \tikzstyle{vertex} = [
        draw, thick, ellipse, minimum size=4.0mm, inner sep=1pt
      ]

      \tikzstyle{edge} = [
      ->, blue, very thick
      ]

    \begin{scope}
      \node[vertex, circle] (z) {$Z$};
      \node[vertex, circle] (a) [right of=z] {$A$};
      \node[vertex, circle] (c) [above of=a] {$C$};
      \node[vertex, circle] (y) [right of=a] {$Y$};

      \draw[edge] (z) to (a);
      \draw[edge] (a) to (y);
      \draw[edge] (c) to (z);
      \draw[edge] (c) to (a);
      \draw[edge] (c) to (y);
      \draw[edge, red, <->] (a) [bend right=30] to (y);

      \node[below of=a, yshift=0.4cm] (l) {$(a)$};

    	\end{scope}

    \begin{scope}[xshift=4.0cm]
      \node[vertex, circle] (z) {$Z$};
      \node[vertex, circle] (a) [right of=z] {$A$};
      \node[vertex, circle] (m) [right of=a] {$M$};
      \node[vertex, circle] (y) [right of=m] {$Y$};

      \draw[edge] (z) to (a);
      \draw[edge] (a) to (m);
      \draw[edge] (m) to (y);
      \draw[edge, red, <->] (a) [bend right=30] to (y);
      \draw[edge, red, <->] (m) [bend left=30] to (y);
      \node[below of=a, yshift=0.4cm, xshift=0.5cm] (l) {$(b)$};
      \end{scope}
    \end{tikzpicture}

  }
  \vspace{-0.4cm}
  \caption{
    (a) The IV model with covariates, and (b) the confounded frontdoor IV model.
  }
  \label{fig:graphs}
\end{figure}
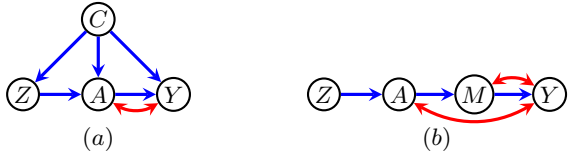

\subsection*{The IV Model With Covariates}

We first consider the model represented by Fig.~\ref{fig:graphs} (a). In the
traditional IV model, the instrument must be randomized with respect to the
treatment and outcome. In practice, it can be difficult to find such
instruments. The IV model with covariates allows for the instrument to be
\emph{conditionally} randomized.

In the social sciences, exogenous shocks are often used as instrumental
variables. For example, suppose an earthquake damages a number of school
buildings, increasing class size at nearby schools. An economist studying 
the effect of class size on test scores might use school closure due to the
earthquake as an instrument for class size.

This instrument may not be entirely plausible. Families with more resources may
be able to avoid living in areas at risk of earthquake damage, and wealthier
school districts may be better able to build robust school buildings. In this
case, the instrument would be confounded with the treatment and outcome.
Observed baseline covariates for the school districts, including information on
tax revenue, may be sufficient to account for this kind of confounding. In
settings of this kind, the IV model with covariates is appropriate, whereas the
traditional IV model is not.

We present the following lower bound on $P(Y(\bar a) = \bar y)$ under this model
when variables are binary:
{\small\begin{align*}
&P_{\bar z}(\bar a, \bar y) +
\\ &     \max \begin{cases} 0\\
            P_{z}(\bar a, \bar y, c)
            - P_{\bar z}(c, \bar a, \bar y)
\\
            P_{\bar z}(a, c, \bar y)
            - \big(
              P_{z}(c, \bar a, y)
              + P_{z}(c, a, \bar y)
            \big)
\\
            P_{\bar z}(a, \bar c, y)
            - P_{z}(\bar c, y)
\\
            P_{z}(\bar a, \bar y)
            - P_{\bar z}(\bar a, \bar y)
\\
            P_{\bar z}(a, y)
            - P_{z}(y)
\\
            P_{\bar z}(a, \bar y)
            - \big(
              P_{z}(a, \bar y)
              + P_{z}(\bar a, y)
            \big)
\\
            P_{\bar z}(a, c, y)
            - P_{z}(c, y)
\\
            P_{z}(\bar a, \bar y, \bar c)
            - P_{\bar z}(\bar c, \bar a, \bar y)
\\
            P_{\bar z}(a, c)
            - \big(P_{z}(c, y)
 + P_{z}(c, a, \bar y)
\big)\\
            P_{\bar z}(a, \bar c)
            - \big(P_{z}(\bar c, a, \bar y)
 + P_{z}(\bar c, y)
\big)\\
            P_{\bar z}(a, \bar c, \bar y)
            - \big(P_{z}(\bar c, \bar a, y)
 + P_{z}(\bar c, a, \bar y)
\big).
\end{cases}
\end{align*}}
A derivation of these bounds is provided in Appendix
\ref{sec:bound-derivations}.

We can use Corollary \ref{cor:generalized-iv-ie} to obtain inequality
constraints on the observed data law implied by the model. Such constraints
cannot easily be concisely expressed. Two representative expressions, each
bounded from above by $1$, are as follows:
{\small \[
    \max_{(c, a) \neq (c', a')} \bigg(
    \sum_{y \neq z} \frac{P(a, c, y, z)}{P(z \mid c)} +
    \sum_{y = z} \frac{P(a', c', y, z)}{P(z \mid c')}
    \bigg)
  \]
  \[
    \max_{c \neq c', z \neq z'}
    \sum_{a, y}  \bigg(
    \frac{P(a, c, y, z)}{P(z \mid c)} + \frac{P(a, c', y, z')}{P(z' \mid c')}
    \bigg).
\]}
\subsection*{Front-Door IV Model With Confounding}

This model, illustrated in Fig.~\ref{fig:graphs} (b), is appropriate when the
effect of treatment is only through an observed mediator, which is itself
confounded with the outcome. In such cases, the traditional IV model can be
applied by ignoring data on $M$, but tighter bounds can be obtained
when the mediator is considered. When all variables are binary, our method
yields the following lower bound on $P(Y(\bar a) = \bar y)$:
{\small
  \begin{align*}
&P_{\bar z}(\bar a, \bar y) +
\\ &     \max \begin{cases} 0\\
            P_{z}(\bar a, \bar y, \bar m)
            - P_{\bar z}(\bar a, \bar m, \bar y)
\\
            P_{z}(\bar a, \bar y)
            - P_{\bar z}(\bar a, \bar y)
\\
            P_{\bar z}(a, \bar m, y)
            - \big(P_{z}(\bar m, y)
 + P_{z}(\bar a, m, y)
\big)\\
            P_{\bar z}(a, m, y)
            - \big(P_{z}(m, y)
 + P_{z}(\bar a, \bar m, y)
\big)\\
            P_{\bar z}(a, \bar y)
            - \big(P_{z}(a, \bar y)
 + P_{z}(\bar a, y)
\big)\\
            P_{\bar z}(a, m)
            - \big(P_{z}(a, m, \bar y)
 + P_{z}(m, y)
 + P_{z}(\bar a, \bar m, y)
\big)\\
            P_{z}(\bar a, \bar y, m)
            - P_{\bar z}(\bar a, m, \bar y)
\\
            P_{\bar z}(a, m, \bar y)
            - \big(P_{z}(\bar a, \bar m, y)
 + P_{z}(a, m, \bar y)
\big)\\
            P_{\bar z}(a, \bar m, \bar y)
            - \big(P_{z}(\bar a, m, y)
 + P_{z}(a, \bar m, \bar y)
\big)\\
            P_{\bar z}(a, y)
            - P_{z}(y)
\\
            P_{\bar z}(a, \bar m)
            - \big(P_{z}(\bar m, y)
 + P_{z}(\bar a, m, y)
 + P_{z}(a, \bar m, \bar y)
\big).
  \end{cases}
\end{align*}}
Finally, we present two functionals of the observed data law that, under the
model, are bounded from above by 1. Each is representative of a class of
constraints that does not have a concise general formula.
{\small \[
    P(\bar a, \bar m, y \mid \bar z) + P(a, m, y \mid \bar z) +
    P(\bar a, m, \bar y \mid z) + P(a, \bar m, \bar y \mid z)
  \]}
{\small \[
    \sum_{a'} P(a', \bar m, \bar y \mid \bar z) +
    \sum_{y'} P(a, m, y' \mid \bar z) +
    P(a, \bar m, y \mid z).
  \]}

\section{CONCLUSION}

The methods pursued in this work take advantage of identified counterfactual
distributions to bound causal parameters that are not identified, and provide
inequality constraints on functionals of the observed data law. These bounds
expand the class of causal models under which counterfactual random variables
may be meaningfully analyzed, and the inequality constraints facilitate
falsification of causal models by observed data. Characterizing the conditions
under which these bounds and inequalities are sharp remains an open question. We
also leave open application of these ideas to other areas of study interested in
counterfactual parameters, such as missing data, dependent data, and policy
learning.


\newpage

\bibliographystyle{plain}
\bibliography{references}

\newpage

\appendix

\section{Proofs}
\label{sec:proofs}

\paragraph{Proof of Proposition \ref{prop:irrelevance}}
  Under these conditions, $V(\cdot)$  is never
  evaluated in the recursive evaluation of ${\bf Y}({\bf Z} = {\bf z}, {\bf A} =
  {\bf a})$ by equation (\ref{eq:po}) for any $V \in \{{\bf Z} \setminus {\bf A}\}$.
\hfill\qedsymbol

\paragraph{Proof of Proposition \ref{prop:cross-world}}
  By generalized consistency, $\bf A(z) = a$ implies $\bf Y(z, a) =
  Y(z)$, and by causal irrelevance $\bf Y(z, a) = Y(a)$.
\hfill\qedsymbol

\paragraph{Proof of Proposition \ref{prop:contradiction}}
  We will show that conditions $(i), (ii), (iii)$ require that $Z({\bf a}) =
  Z({\bf b})$ for all $Z \in \bf X \cup Y$. It follows that if there exists $Z
  \in \bf X \cup Y$ such that $Z({\bf a}) \neq Z({\bf b})$, there is no single
  value of $\epsilon_{\bf V}$ that leads to $\bf X(a) = x$ and to $\bf Y(b) =
  y$, and the events must be contradictory.\\ \\
  
  Let ${\bf C}_1$ be all variables that are causally relevant to $Z$ in both $\bf
  X \cup A$ and $\bf Y \cup B$, let ${\bf C}_2$ be all variables that are causally
  relevant to $Z$ in $\{\bf X \cup A\} \setminus \{\bf Y \cup B\}$, and that are
  causally relevant to $Z$ given $\bf Y \cup B$, and let ${\bf C}_3$ be all
  variables that are causally relevant to $Z$ in $\{\bf Y \cup B\} \setminus
  \{\bf X \cup A\}$ and that are causally relevant to $Z$ given $\bf X \cup A$.
  \\ \\

  We note that condition $(i)$ specifies that ${\bf C}_1({\bf a}) = {\bf
    C}_1({\bf b})$. Then, condition $(ii)$ requires that ${\bf C}_2({\bf a}) =
  {\bf C}_2({\bf b})$; otherwise there would be a contradiction between
  $\bf X(a) = x$ and ${\bf Y(b) = y \land C}_2({\bf b}) = {\bf c}_2$.
  In other words, there are no values of $\epsilon_{\bf V}$ that lead to $\bf
  X(a) = x$ that do not lead to ${\bf Y(b) = y \land C}_2({\bf b}) = {\bf
    C}_2({\bf a}) $. For an analogous reason, condition $(iii)$ requires that
  ${\bf C}_3({\bf a}) = {\bf C}_3({\bf b})$, \\ \\
  
  We next note that by construction, no variable $D \neq Z$ in ${\bf \{X \cup Y
    \cup A \cup B\} \setminus \{C}_1 \cup {\bf C}_2 \cup {\bf C}_3\}$ is
  causally relevant to $Z$ given ${\bf C}_1 \cup {\bf C}_2 \cup {\bf C}_3$. \\ \\
  
  Under conditions $(i), (ii), (iii)$, by consistency $Z({\bf a}) = Z({\bf c}_1,
  {\bf c}_2, {\bf c}_3, {\bf a, x \setminus }\{z\})$. By causal irrelevance of
  all variables $D \neq Z$ in $\bf A \cup X$ not in ${\bf C}_1 \cup {\bf C}_2 \cup
  {\bf C}_3$ given ${\bf C}_1 \cup {\bf C}_2 \cup
  {\bf C}_3$ we have $Z({\bf c}_1, {\bf c}_2, {\bf c}_3, {\bf a, x\setminus
  }\{z\}) = Z({\bf c}_1, {\bf c}_2, {\bf c}_3)$. For the same reasons, $Z({\bf
  b}) = Z({\bf c}_1, {\bf c}_2, {\bf c}_3, {\bf b, y \setminus }\{z\}) = Z({\bf
  c}_1, {\bf c}_2, {\bf c}_3)$. This yields $Z({\bf a}) = Z({\bf b})$, completing
  the proof. This recursive definition of contradiction between events must
  resolve because each recursive call expands the number of variables specified in
  one of the events, and there are a finite number of variables in the graph. 
\hfill\qedsymbol

\paragraph{Proof of Proposition \ref{prop:partition}}
  Each event in the partition either directly specifies $\bf Y(a_1) = y$, or
  specifies an event that implies $\bf Y(a_1) = y$ by Proposition
  \ref{prop:cross-world}, so we know the disjunction of these events is a subset
  of $\bf Y(a_1) = y$.\\ \\
  
  Now we show that all sets are disjoint. Event (\ref{eq:partition2}) and all
  events of the form of (\ref{eq:partition3}) are pairwise disjoint, as each
  requires a different event under the intervention ${\bf Z = z_1}$. These
  events are all disjoint from event (\ref{eq:partition1}), as the former each
  specifies a value of $\bf z'$ for which $\bf A(z') = a$, and the latter
  specifies that no such $\bf z'$ can exist.\\ \\
  
  Finally, we show that the events are exhaustive. In addition to the
  requirement that $\bf Y(a_1) = y$ specified above, a disjunction of the
  partition set events requires that $\bf A(z_1)$ take a value in $\bf a_1,
  \cdots, a_N$, which is tautological. It also requires that $\bf \forall z'
  \big({\bf A}(z') \neq a_1 \big)$ or $\bf \exists z' \big(A(z') = a_1\big)$,
  which is likewise tautological. No other requirements are present.
\hfill\qedsymbol

\paragraph{Proof of Proposition \ref{prop:lower-bounds}}
  $E_{\bf x} \land Y(b) \neq y \implies \psi_{\bf b}(E_{\bf x}) \land Y(\bf b)
  \neq y$. Therefore $P(E_{\bf x}, Y(b) \neq y) \leq P(\psi_{\bf b}(E_{\bf
    x}), Y(\bf b) \neq y)$.\\ \\

  $P(E_{\bf x}) = P(E_{\bf x}, {\bf Y(b) = y}) + P(E_{\bf x}, \bf Y(b) \neq y)$,
  which, in combination with the preceding, yields
  $P(E_{\bf x}, {\bf Y(b) = y}) \ge P(E_{\bf x}) - P(\psi_{\bf b}(E_{\bf x}),
  \bf Y(b) \neq y)$.\\ \\

  By construction, $E_{\bf x} \implies \bf X(a) = x$, yielding the first bound.
  A symmetric argument yields the second.
\hfill\qedsymbol

\paragraph{Proof of Proposition \ref{prop:upper-bound}}
  The inclusion-exclusion formula states that
  $\sum_{i = 1}^N P(E_i) = P(\cup_{i = 1}^N E_i) + \sum_{i < j} P(E_i \cap E_j)$.
  Under the conditions of the Proposition,
  $\sum_{i < j} P(E_i \cap E_j) \le k - 1$, as any set of values of
  $\epsilon_{\bf V}$ may imply at most $k - 1$ of the events in
  $\{E_i \cap E_j \mid i < j\}$, and the set of all possible values of
  $\epsilon_{\bf V}$ has measure $1$.
  Noting that $P(\cup_{i = 1}^N E_i) \le 1$ by definition, we have
  $P(\cup_{i = 1}^N E_i) + \sum_{i < j} P(E_i \cap E_j) \le k$.
\hfill\qedsymbol

\paragraph{Proof of Corollary \ref{cor:fix-subset}}
  $\bf \tilde A$ is causally irrelevant to $\bf Y$ given $\bf A$, so the lower
  bounds follow directly from Propositions \ref{prop:cross-world} and
  \ref{prop:manski}. The upper bounds follow because, by the same reasoning,
  $P({\bf Y(\tilde a) \neq y, \hat A(\tilde a) = \hat a}) \le P({\bf Y(a) \neq y})$.
\hfill\qedsymbol

\paragraph{Proof of Corollary \ref{cor:generalized-iv-ie}}
  Proposition \ref{prop:contradiction} tells us that if $\bf Z$ is a
  generalized instrument for $\bf A$ with respect to $\bf Y$, two events of the
  form ${\bf A(z) = a \land Y(a) = y}$ and $\bf A(z') = a' \land Y(z') = y'$ are
  contradictory if $\neg\big(({\bf z = z' \land a \neq a'}) \lor ({\bf a = a'
    \land y \neq y'}) \big)$. Therefore $\Phi(S)$ provides the size of the
  largest subset of events that are mutually compatible. The result then follows
  immediately from Proposition \ref{prop:upper-bound}.
\hfill\qedsymbol

\section{Equivalence Class Completeness}
\label{sec:eq-class}

In this appendix we introduce an assumption we call {\it equivalence class
  completeness}. Following \cite{balke-thesis}, we say two values in the domain
of $\epsilon_{\bf V}$ are in the same equivalence class if they will produce the
same results through equation (\ref{eq:po}) for every variable under every
intervention.

\begin{assumption}[Equivalence Class Completeness]
  Every equivalence class of $\epsilon_{\bf V}$ is non-empty in the domain of
  $\epsilon_{\bf V}$.
\end{assumption}

We note that this assumption precludes the possibility of vacuous edges (edges
that reflect no causal influence) in the graph. In the case of a vacuous edge
$X \rightarrow V$, for example, there will be no values of $\epsilon_V$ that
produce $V = v$ for some setting of $X = x$ and $V = v'$ for another setting of
$X = x'$, because $V$ is not a function of $X$. This would mean that the
equivalence class of values of $\epsilon_{\bf V}$ that lead to $V = v$ under
intervention $X = x$ and $V = v'$ under intervention $X = x'$ is empty,
violating the assumption.

For the same reason, this assumption precludes the possibility of
context-specific exclusion restrictions. If there is an edge $X \rightarrow V$
such that for some level of the other parents of $V$, denoted by $\bf Y = y$,
$V$ is not a function of $X$, then there will exist no value of $\epsilon_V$
that leads to $V = v$ under intervention $X = x, \bf Y = y$ but to $V = v'$
under intervention $X = x', \bf Y = y$.

We now show that under this assumption, the criteria in Propositions
\ref{prop:cross-world} and \ref{prop:contradiction} for cross-world implication
and event contradiction respectively are necessary as well as sufficient.
It follows that unless there exists background knowledge that the equivalence
class completeness assumption is violated, due for example to deterministic
causal relationships, all implications and contradictions relevant for deriving
bounds and inequality constraints can be obtained using these criteria.

\begin{proposition}
  Under the equivalence class completeness assumption, $\bf Z$  is causally
  irrelevant to $\bf Y$ given $\bf A$ if and only if:
  {\small \[\bf A(z) = a \land Y(z) = y \implies Y(a) = y.\]}
\end{proposition}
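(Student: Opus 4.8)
The plan is to prove the two directions of the biconditional separately. The ``only if'' direction is already established: it is exactly Proposition~\ref{prop:cross-world}, which holds with no appeal to equivalence class completeness. So the work is entirely in the ``if'' direction, which I would prove by contraposition: assuming $\bf Z$ is \emph{not} causally irrelevant to $\bf Y$ given $\bf A$, I would exhibit a value of $\epsilon_{\bf V}$ witnessing that the implication $\bf A(z) = a \land Y(z) = y \implies Y(a) = y$ fails --- i.e., a value producing $\bf A(z) = a \land Y(z) = y$ but $\bf Y(a) \neq y$.

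First I would unpack what failure of causal irrelevance means graphically: by Proposition~\ref{prop:irrelevance}, $\bf Z$ is causally irrelevant to $\bf Y$ given $\bf A$ exactly when every directed path from $\bf Z$ to $\bf Y$ passes through $\bf A$. So the negation gives us a directed path $\pi$ from some $Z \in \bf Z$ to some $Y \in \bf Y$ that avoids $\bf A$ entirely. The idea is that along $\pi$, the value of $Z$ can propagate down to $Y$ without being ``blocked'' by the intervention on $\bf A$, so there should be a choice of structural mechanisms --- hence a choice of $\epsilon_{\bf V}$ --- along which $Y(z) \neq Y(a)$ while everything needed to realize $\bf A(z) = a \land Y(z) = y$ still holds. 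Concretely, I would pick target values: let $\bf a$ and $\bf z$ be the intervention values in the statement, pick $\bf y$ and a distinguished coordinate $Y_0 \in \bf Y$ on the path $\pi$, and specify $\epsilon_{\bf V}$ so that: (a) each non-path variable's mechanism is constant, outputting whatever value makes $\bf A(z) = a$, $\bf Y(z) = y$ hold; (b) each variable on $\pi$ has a mechanism that is genuinely sensitive to its path-parent, taking (say) the value dictated by the ``$Z = z$'' computation under intervention $\bf Z = z$, but a different value under intervention $\bf A = a$ where the upstream portion of $\pi$ now starts from $Z$'s ``natural'' (non-intervened) value, which can be chosen different from $z$. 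Equivalence class completeness is precisely what guarantees that such an $\epsilon_{\bf V}$ actually lies in the domain: the prescription above defines an equivalence class of noise values (it fixes the input-output behavior of every mechanism under every intervention), and the assumption says that class is non-empty.

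The main obstacle, and the step I would spend the most care on, is the bookkeeping in part (b): I must choose the path mechanisms and the ``natural'' value of the source $Z$ so that the resulting counterfactuals are mutually consistent --- in particular, that under the intervention $\bf Z = z$ the path actually delivers $\bf Y(z) = y$ (so the antecedent holds), while under $\bf A = a$ the path delivers a value of $Y_0$ different from its $\bf z$-value (so $\bf Y(a) \neq y$), and that these two requirements do not collide with each other or with the constraints on off-path variables. Because $\bf A$ is not on $\pi$, intervening on $\bf A$ does not pin any variable of $\pi$, so the upstream part of $\pi$ under $\doo(\bf a)$ is free to start from $Z$'s natural value; choosing that value distinct from $z$ and propagating an ``identity-like'' mechanism down $\pi$ yields the desired discrepancy at $Y_0$. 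One subtlety is that some variables on $\pi$ may also be parents of off-path variables or of $\bf A$ --- here I would note that we are free to hold those downstream mechanisms constant, so the discrepancy on $\pi$ need not disturb the values chosen in part (a); I would also remark (as the appendix flags for Proposition~\ref{prop:contradiction}) that this is where the absence of context-specific exclusion restrictions, guaranteed by equivalence class completeness, is used --- each mechanism can be made responsive to the relevant parent uniformly across the contexts we invoke. Once such an $\epsilon_{\bf V}$ is exhibited, the implication fails, completing the contrapositive and hence the proposition.
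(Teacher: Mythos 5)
Your proposal is correct and follows essentially the same route as the paper: one direction is exactly Proposition~\ref{prop:cross-world}, and the converse is proved by contraposition, using equivalence class completeness to populate an equivalence class of $\epsilon_{\bf V}$ in which $\bf Y$ responds to $\bf Z$ along a directed path avoiding $\bf A$. If anything, your sketch is more explicit than the paper's, which argues via $ {\bf Y}({\bf a},{\bf z}) \neq {\bf Y}({\bf a})$ and generalized consistency without spelling out that the witnessing noise value must simultaneously realize the antecedent ${\bf A}({\bf z})={\bf a} \land {\bf Y}({\bf z})={\bf y}$ --- the bookkeeping you flag as the main obstacle.
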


\begin{proof}
  Sufficiency is given by proposition \ref{prop:cross-world}. We demonstrate
  necessity as follows. Assume $\bf Z$ is not causally irrelevant to $\bf Y$
  given $\bf A$, i.e. there is a path from $\bf Z$ to $\bf Y$ not through $\bf
  A$. Then by equivalence class completeness, there must be values of
  $\epsilon_{\bf V}$ for which $\bf Y$ is a function of $\bf Z$ when $\bf
  A$ is exogenously set and $\bf Z$ does not take the value $\bf z$ under no
  intervention. Therefore, there will exist values of $\epsilon_{\bf V}$
  such that $\bf Y(a, z) \neq Y(a)$. By generalized consistency $\bf A(z) = a
  \land Y(z) = y \implies Y(a, z) = y$, which contradicts $\bf A(z) = a
  \land Y(z) = y \implies Y(a) = y$.
\end{proof}

\begin{proposition}
  \label{prop:necessary-contradiction}
 Under the equivalence class completeness assumption,
 two events $\bf X(a) = x$ and $\bf Y(b) = y$ are contradictory if and only if
 there exists $Z \in \bf X \cup Y$ such that $Z({\bf a}) \neq Z({\bf b})$, and
 all of the following hold:
 \begin{enumerate}
   \item[(i)] Variables in the subsets of both ${\bf X} \cup {\bf A}$ and ${\bf
       Y} \cup {\bf B}$ causally relevant for $Z$ are set to the same values
     in ${\bf x},{\bf a}$, and ${\bf y},{\bf b}$.
   \item[(ii)] Let $C \in \bf \{X \cup A\} \setminus \{Y \cup B\}$ be any
   variable that is causally relevant to $Z$ in $\bf X \cup A$ and causally
   relevant to $Z$ given $\bf Y \cup B$, with $C$ set to $c$ in $\bf x, a$.
   Then $\bf X(a) = x$ and ${\bf Y(b) = y} \land C({\bf b}) = c'$ are
   contradictory when $c \neq c'$.
   \item[(iii)] Let $C \in \bf \{Y \cup B\} \setminus \{X \cup A\}$ be any
   variable that is causally relevant to $Z$ in $\bf Y \cup B$ and causally
   relevant to $Z$ given $\bf X \cup A$, with $C$ set to $c$ in $\bf y, b$.
   Then $\bf Y(b) = y$ and ${\bf X(a) = x} \land C({\bf a}) = c'$ are
   contradictory when $c \neq c'$.
\end{enumerate}
\end{proposition}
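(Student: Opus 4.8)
The two directions decouple. The ``if'' direction is already in hand: the listed conditions (i)--(iii) together with a witness $Z \in \bf X \cup Y$ for which the two worlds force $Z(\mathbf a) \neq Z(\mathbf b)$ are exactly the hypotheses of Proposition \ref{prop:contradiction}, which establishes contradiction with no appeal to the assumption, so I would simply cite it. All the work is in the ``only if'' direction, which I would prove by contraposition. Suppose \emph{no} $Z \in \bf X \cup Y$ satisfies all of (i)--(iii). I will construct, for each $V \in \bf V$, a single map $r_V$ from values of $\pa_{\cal G}(V)$ to values of $V$ such that the joint behavior $\{r_V\}$ makes both $\bf X(a) = x$ and $\bf Y(b) = y$ hold under recursive substitution (\ref{eq:po}). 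Since by equivalence class completeness the equivalence class of $\epsilon_{\bf V}$ whose induced mechanisms $f_V(\cdot,\epsilon_V)$ equal the $r_V$ is non-empty, some value of $\epsilon_{\bf V}$ realizes both events, so they are compatible. This is the same use of the assumption as in the preceding proposition, where non-irrelevance plus completeness produced an $\epsilon_{\bf V}$ with $\bf Y(a,z) \neq Y(a)$, now iterated over the whole graph.

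The construction processes the vertices of $\cal G$ in a topological order. For each $V$ pinned by $\bf X(a) = x$ I record the required value $V(\mathbf a)$, and for each $V$ pinned by $\bf Y(b) = y$ the required value $V(\mathbf b)$, each computed from the parent values already fixed. The invariant I would maintain, along a recursive descent mirroring (i)--(iii), is: for every $V$ whose values in \emph{both} worlds are constrained and all of whose causally relevant parents have already been forced to agree across the two worlds, the two constraints on $V$ coincide, so a single $r_V$ serves both worlds. Failure of the criterion at $V$ is precisely what supplies the slack to keep this invariant. If (i) fails at $V$, some causally relevant parent may be left disagreeing across worlds, so $V$ receives two \emph{different} input tuples and $r_V$ can be chosen sending them to the two required outputs. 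If instead (ii) (symmetrically (iii)) fails, the relevant sub-event --- $\bf X(a) = x$ paired with $\bf Y(b) = y \land C(\mathbf b) = c'$ --- was not found contradictory, so I would apply the inductive hypothesis to that strictly larger pair of events, realize it jointly, and thereby force $C$ to a value consistent with both worlds, restoring the invariant at $V$.

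Well-foundedness and termination come essentially for free. Each recursive call generated by (ii) or (iii) strictly increases the number of variables with specified values, bounded by $|\bf V|$ --- the finiteness remark already used at the end of the proof of Proposition \ref{prop:contradiction} --- and the outer induction runs upward on the number of variables mentioned in the two events, with base case a pair of events mentioning all of $\bf V$, where the recursive criterion and the semantic notion of contradiction agree by direct inspection. When the descent bottoms out we have a globally consistent collection $\{r_V\}$, and equivalence class completeness closes the argument.

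The step I expect to be the real obstacle is the bookkeeping of the recursion, and above all showing that the local freedoms at the various candidate witnesses can be exercised \emph{simultaneously}: a choice keeping the parents of one variable disagreeing across worlds must not break the invariant at a downstream variable, and the sub-events spawned by the near-misses of (ii)/(iii) must be shown to be exactly the pairs the criterion interrogates. Making this rigorous calls for organizing the argument around a fixed topological order and exploiting the causal-relevance/latent-projection structure so that causally unrelated portions of the graph are handled independently; it also relies on the routine but essential monotonicity fact that a variable's causal relevance to $Z$ can only shrink as more variables are added to the conditioning set, which is what makes conditions (i)--(iii) compose correctly across recursion levels.
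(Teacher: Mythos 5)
Your proposal matches the paper's proof in its essentials: sufficiency is discharged by citing Proposition \ref{prop:contradiction}, and necessity is proved by contraposition using equivalence class completeness to assert that the equivalence class of $\epsilon_{\bf V}$ realizing both events simultaneously is non-empty, hence the events are compatible. Where you differ is in organization. The paper does not run a global topological-order construction of response functions; instead it treats the failure of each condition separately and locally: if (i) fails, some commonly relevant variable is set to different values in the two events, so there \emph{is} an equivalence class in which $Z$ responds differently to the two input tuples while both events hold, and ECC makes it non-empty; if (ii) (resp.\ (iii)) fails, the non-contradiction of the sub-event pair supplies an equivalence class in which the disagreeing variable $D$ propagates a disagreement to $Z$, and again ECC makes it non-empty. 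In other words, the paper directly \emph{asserts} the existence of the required equivalence class as a combinatorial object (a consistent assignment of counterfactual responses) rather than constructing it vertex by vertex; the simultaneity bookkeeping you flag as the main obstacle is precisely the step the paper elides by this assertion. Your plan is therefore more ambitious than the published argument --- if carried out, the explicit $\{r_V\}$ construction with the topological-order invariant and the monotonicity of causal relevance would yield a more rigorous proof than the one in the appendix --- but as written it stops at the same point the paper does, with the compositionality of the local freedoms acknowledged rather than established. I would not call this a gap relative to the paper's own standard of rigor, but you should be aware that neither your sketch nor the paper's proof fully verifies that the per-variable choices cohere into a single equivalence class when distinct candidate witnesses $Z$ fail different conditions.
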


\begin{proof}
  Sufficiency is given by Proposition \ref{prop:contradiction}. To see the
  necessity of condition $(i)$, we note that if variables causally relevant in
  $\bf X \cup A$ and in $\bf Y \cup B$ took different values in $\bf x, a$ and
  $\bf y, b$, then if $Z({\bf x, a}) \neq Z({\bf y, b})$, there must be an
  equivalence class that leads to these two results under their respective
  interventions. By the equivalence class completeness assumption it will be
  non-empty. Therefore there exists a value of $\epsilon_{\bf V}$ that leads to
  both events, and they are not contradictory.

  We now demonstrate the necessity of condition $(ii)$. If $(ii)$ does not hold,
  there must be a variable $D$ that is causally relevant to $Z$ in $\bf X \cup A$
  and given $\bf Y \cup B$ that can take different values under equivalence
  classes of $\epsilon_{\bf V}$ that lead to $\bf X(a) = a$ and $\bf Y(b) = y$
  under their respective interventions. Because $D$ is causally relevant given
  both the remainder of $\bf X \cup A$, and given all of $\bf Y \cup B$, and can
  for single value of $\epsilon_{\bf V}$ take different values under the
  relevant interventions, it is possible for that value of $\epsilon_{\bf V}$
  to yield different values of $Z$ under the two interventions. By equivalence
  class completeness, an $\epsilon_{\bf V}$ leading to this result must exist,
  leading to a lack of contradiction between the two events. Condition $(iii)$
  is necessary by an analogous argument.
\end{proof}

\section{Redundant Lower Bounds}
\label{sec:redundant}

We present results that establish the redundance of lower bounds induced by
certain events $E_1$ and $E_2$ through Corollary \ref{cor:lower-bounds}.

We first observe that the event chosen for $E_1$ in Proposition
\ref{prop:lower-bounds} should be compatible with the event $\bf Y(b) = y$. If
it is not, $\psi_{\bf b}(E_1) \land \bf Y(b) \neq y$ is equivalent to $\psi_{\bf
  b}(E_1)$. Because $E_1 \implies \psi_{\bf b}(E_1)$, by Proposition
\ref{prop:manski} any such $E_1$ will induce a negative lower bound, which is of
course uninformative. An analogous argument can be made for $E_2$.

We next consider a proposition that explains why we did not need to consider the
bound induced by $E_1 \triangleq A(\bar{z}) = \bar{a}$ to obtain sharp bounds in
Section \ref{sec:iv}.

\begin{proposition}
  \label{prop:bound-irrelevance}
  Let $E_1$ imply ${\bf X(a) = a}$ and let ${\bf Y(b) \neq y}$ imply $\psi_{\bf
    b}(E_1)$. Then the event $E_2 \triangleq {\bf Y(b) = y}$ induces, through
  Proposition \ref{prop:lower-bounds}, a weakly better bound than does $E_1$. An
  analogous claim holds for $E_2$.
\end{proposition}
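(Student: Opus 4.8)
The plan is to directly compare the two lower bounds on $P({\bf X(a) = x, Y(b) = y})$ that Proposition~\ref{prop:lower-bounds} produces, one from $E_1$ and one from $E_2 \triangleq {\bf Y(b) = y}$, and show the latter is weakly larger. Recall from Proposition~\ref{prop:lower-bounds} that the bound induced by $E_1$ is $P(E_1) - P(\psi_{\bf b}(E_1), {\bf Y(b) \neq y})$, while the bound induced by $E_2$ is $P(E_2) - P(\psi_{\bf a}(E_2), {\bf X(a) \neq x})$. The first thing I would exploit is the hypothesis that ${\bf Y(b) \neq y}$ implies $\psi_{\bf b}(E_1)$: under this assumption the intersection $\psi_{\bf b}(E_1) \land {\bf Y(b) \neq y}$ is just ${\bf Y(b) \neq y}$, so the $E_1$-bound simplifies to $P(E_1) - P({\bf Y(b) \neq y})$.

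Next I would rewrite $P({\bf Y(b) \neq y}) = 1 - P({\bf Y(b) = y}) = 1 - P(E_2)$, so the $E_1$-bound becomes $P(E_1) + P(E_2) - 1$. It therefore suffices to show $P(E_1) + P(E_2) - 1 \le P(E_2) - P(\psi_{\bf a}(E_2), {\bf X(a) \neq x})$, i.e. $P(E_1) - 1 \le - P(\psi_{\bf a}(E_2), {\bf X(a) \neq x})$, i.e. $P(\psi_{\bf a}(E_2), {\bf X(a) \neq x}) \le 1 - P(E_1)$. Since $E_1 \implies {\bf X(a) = x}$ by hypothesis, we have $P(E_1) \le P({\bf X(a) = x})$, hence $1 - P(E_1) \ge 1 - P({\bf X(a) = x}) = P({\bf X(a) \neq x}) \ge P(\psi_{\bf a}(E_2), {\bf X(a) \neq x})$, where the last inequality is monotonicity of probability (Proposition~\ref{prop:manski}) applied to the subset relation $\psi_{\bf a}(E_2) \land {\bf X(a) \neq x} \subseteq {\bf X(a) \neq x}$. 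Chaining these gives the claim; the analogous statement for $E_2$ follows by the symmetric argument, swapping the roles of $({\bf X, a, x})$ and $({\bf Y, b, y})$.

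I do not anticipate a genuine obstacle here — the argument is a short chain of elementary probability manipulations once the hypothesis ``${\bf Y(b) \neq y}$ implies $\psi_{\bf b}(E_1)$'' is used to collapse the correction term. The one place to be careful is bookkeeping: making sure that the simplification of $\psi_{\bf b}(E_1) \land {\bf Y(b) \neq y}$ really does use only the stated hypothesis, and that the bound from $E_2 = {\bf Y(b) = y}$ is obtained by substituting $E_2$ into the \emph{second} of the two bounds in Proposition~\ref{prop:lower-bounds} (the one that corrects against ${\bf X(a) \neq x}$), since $E_2$ trivially implies ${\bf Y(b) = y}$. A minor subtlety worth a remark is that $\psi_{\bf a}(E_2)$ and $\psi_{\bf b}(E_1)$ are computed with respect to contradiction in the causal model, but all we need about them is the defining subset property $E \implies \psi_{\bf c}(E)$ together with monotonicity, so no graphical reasoning is required.
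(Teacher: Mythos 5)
Your proposal is correct and follows essentially the same route as the paper: both use the hypothesis ${\bf Y(b) \neq y} \implies \psi_{\bf b}(E_1)$ to collapse the $E_1$-bound's correction term to $P({\bf Y(b)\neq y})$, rewrite the $E_1$-bound as $P({\bf Y(b)=y}) - P(\neg E_1)$ (your $P(E_1)+P(E_2)-1$ is the same quantity), and then use $E_1 \implies {\bf X(a)=x}$ to dominate $P(\psi_{\bf a}(E_2), {\bf X(a)\neq x})$ by $P(\neg E_1)$. The only cosmetic difference is that you factor the final comparison through the intermediate event ${\bf X(a)\neq x}$, whereas the paper notes directly that $\psi_{\bf a}(E_2)\land {\bf X(a)\neq x} \implies \neg E_1$.
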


\begin{proof}
  $P(\psi_{\bf b}(E_1), {\bf Y(b) \neq y}) =
  P({\bf Y(b) \neq y})$, as ${\bf Y(b) \neq y} \implies \psi_{\bf b}(E_1)$ by
  assumption.

  The lower bound induced by $E_1$, given by Proposition
  \ref{prop:lower-bounds} as $P(E_1) - P(\psi_{\bf b}(E_1), {\bf Y(b) \neq b})$,
  can now be expressed as $P({\bf Y(b) = y}) - P(\neg E_1)$.

  We now note $\psi_{\bf a}({\bf Y(b) = b})) \land {\bf X(a) \neq x} \implies \neg
  E_1$, as $E_1 \implies {\bf X(a) = x}$ by construction. Therefore $P(\neg E_1)
  \ge P(\psi_{\bf a}({\bf Y(b) = b}) \land {\bf X(a) \neq x})$, and
  the bound induced by $E_2 \triangleq {\bf Y(b) = y}$, given by Proposition
  \ref{prop:lower-bounds} as $P({\bf Y(b) = y}) - P(\psi_{\bf a}({\bf Y(b) = b}))
  \land {\bf X(a) \neq x})$, must be better than that induced by $E_1$.
\end{proof}

In the binary IV case described in Section \ref{sec:iv}, every event
under intervention $Z = z$ is compatible with the event $A(\bar{z}) = \bar{a}$.
This means that in particular $\psi_{z}(A(\bar{z}) = \bar{a})$ is implied by
$\neg \big(A(z) = a \land Y(z) = y \big)$. The lower bound on
event (\ref{eq:iv2}) induced by $E_1 \triangleq A(\bar{z}) = \bar{a}$ is
therefore redundant given the bound induced by $E_2 \triangleq A(z) = a \land
Y(z) = y$.

Next, we identify an additional condition under which bounds induced by
particular valid choices of $E_1$ and $E_2$ are irrelevant. This condition does
not appear in the IV model.

\begin{proposition}
  \label{prop:bound-irrelevance-superset}
  If two candidates for events $E_1$ ($E_2$), under Proposition
  \ref{prop:lower-bounds} are each compatible with the same events under
  $\bf B = b$ ($\bf A = a$), the candidate event with larger density will induce
  a better bound.
\end{proposition}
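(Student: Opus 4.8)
The plan is to observe that in the lower bound of Proposition~\ref{prop:lower-bounds} the subtracted term depends on the candidate event only through the set of full outcomes under the conflicting intervention that it fails to contradict, so that two candidates agreeing on that set produce bounds differing only in their leading term. So the first thing I would do is write out, for a fixed target event $\bf X(a) = x \land Y(b) = y$ and a candidate $E_1$ implying $\bf X(a) = x$, the induced bound $P(E_1) - P(\psi_{\bf b}(E_1), \bf Y(b) \neq y)$, and record that $\psi_{\bf b}(E_1)$ is by definition the disjunction of exactly those outcomes under $\bf B = b$ that are compatible with $E_1$.

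Next I would take two candidates $E_1, E_1'$, both implying $\bf X(a) = x$, that are compatible with the same outcomes under $\bf B = b$. The key step is then immediate: since $\psi_{\bf b}(\cdot)$ is a function of that compatibility set alone, $\psi_{\bf b}(E_1)$ and $\psi_{\bf b}(E_1')$ are the same event, hence $P(\psi_{\bf b}(E_1), \bf Y(b) \neq y) = P(\psi_{\bf b}(E_1'), \bf Y(b) \neq y) =: c$. The two bounds supplied by Proposition~\ref{prop:lower-bounds} are therefore $P(E_1) - c$ and $P(E_1') - c$, whose difference is $P(E_1) - P(E_1')$; the candidate with the larger density yields the larger, hence (weakly) better, lower bound.

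Finally I would note that the $E_2$ case is the mirror image: with $E_2, E_2'$ implying $\bf Y(b) = y$ and compatible with the same outcomes under $\bf A = a$, one gets $\psi_{\bf a}(E_2) = \psi_{\bf a}(E_2')$, the common subtracted term $P(\psi_{\bf a}(E_2), \bf X(a) \neq x)$, and the bound with the larger $P(E_2)$ dominating by the same one-line comparison. I do not expect a real obstacle here; the only points requiring a sentence of care are (i) reading ``compatible with the same events under $\bf B = b$'' as the assertion $\psi_{\bf b}(E_1) = \psi_{\bf b}(E_1')$, which is exactly how $\psi$ was defined in Proposition~\ref{prop:lower-bounds}, and (ii) stressing that this is only a comparison between two bounds both already certified valid by Proposition~\ref{prop:lower-bounds}, not a new validity claim, so no further probabilistic argument is needed.
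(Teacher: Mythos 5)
Your proposal is correct and follows essentially the same argument as the paper's proof: both note that the subtracted term in Proposition~\ref{prop:lower-bounds} depends on the candidate event only through its set of compatible outcomes under the conflicting intervention, so equal compatibility sets force equal subtracted terms and the comparison reduces to comparing $P(E_1)$ with $P(E_1')$. Your version is somewhat more explicit about identifying $\psi_{\bf b}(E_1) = \psi_{\bf b}(E_1')$ as the precise formalization of the hypothesis, but the substance is identical.
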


\begin{proof}
  The bound in Proposition \ref{prop:lower-bounds} is expressed as the density
  of $E_1$ ($E_2$) less a function of the events compatible with $E_1$ ($E_2$).
  If the two candidate events are compatible with the same set of events, the
  negative quantity in the bound will be the same. The bound with the larger
  positive quantity -- the density of $E_1$ ($E_2$) -- must be larger.
\end{proof}

\begin{proposition}
  \label{prop:same-compat}
  Under the equivalence class completeness assumption,
  an event $\bf Y(a) = y \land X(a) = x$ is compatible with the same events
  under intervention $\bf A = a'$ as is $\bf X(a) = x$ if and only if $\bf Y$,
  and all descendants of $\bf Y$ in $\bf X$ to which $\bf Y$ is causally
  relevant given the remainder of $\bf X$, have at least one causally relevant
  ancestor in $\bf A$ that takes different values in $\bf a$ than in $\bf a'$.
\end{proposition}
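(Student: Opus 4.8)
The plan is to work under the equivalence class completeness assumption, which (as in Appendix~\ref{sec:eq-class}) lets us identify a value of $\epsilon_{\mathbf{V}}$ with a tuple of response functions, one per variable, and guarantees every such tuple is realized; two events are then compatible exactly when some tuple realizes their conjunction. Write $\mathbf{W}$ for $\mathbf{Y}$ together with those descendants of $\mathbf{Y}$ in $\mathbf{X}$ to which $\mathbf{Y}$ is causally relevant given the remainder of $\mathbf{X}$ — the set appearing in the statement. Since $\mathbf{Y}(\mathbf{a}) = \mathbf{y} \land \mathbf{X}(\mathbf{a}) = \mathbf{x}$ implies $\mathbf{X}(\mathbf{a}) = \mathbf{x}$, every event under $\mathbf{A} = \mathbf{a}'$ compatible with the former is compatible with the latter, so only the converse inclusion is at stake. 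Because an event under $\mathbf{A} = \mathbf{a}'$ can pin down the entire $\mathbf{a}'$-world, I would first reduce the claim to the following: every $\epsilon_{\mathbf{V}}$ realizing $\mathbf{X}(\mathbf{a}) = \mathbf{x}$ can be replaced by one realizing $\mathbf{X}(\mathbf{a}) = \mathbf{x} \land \mathbf{Y}(\mathbf{a}) = \mathbf{y}$ while inducing the same values for all variables under $\mathbf{A} = \mathbf{a}'$.

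Next I would record the structural fact underlying Proposition~\ref{prop:irrelevance} and the proof of Proposition~\ref{prop:contradiction}: $V(\mathbf{a})$ and $V(\mathbf{a}')$ coincide as functions of $\epsilon_{\mathbf{V}}$ if and only if no member of $\mathbf{A}$ causally relevant to $V$ is set differently by $\mathbf{a}$ and $\mathbf{a}'$ (that is, $V$ has no ``differing causally relevant $\mathbf{A}$-ancestor''); and if $V(\mathbf{a}) \equiv V(\mathbf{a}')$ then $D(\mathbf{a}) \equiv D(\mathbf{a}')$ for every $D$ causally relevant to $V$, since otherwise equivalence class completeness lets us route a cross-world discrepancy at $D$ along a causally relevant directed path into $V$. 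For necessity (contrapositive), let some $W \in \mathbf{W}$ lack a differing causally relevant $\mathbf{A}$-ancestor, so $W(\mathbf{a}) \equiv W(\mathbf{a}')$. If $W \in \mathbf{X}$, then since $\mathbf{Y}$ is causally relevant to $W$ given $\mathbf{X} \setminus \{W\}$ some $Y_0 \in \mathbf{Y}$ is causally relevant to $W$, whence $Y_0(\mathbf{a}) \equiv Y_0(\mathbf{a}')$ and $Y_0 \in \mathbf{Y} \subseteq \mathbf{W}$ also lacks such an ancestor; this reduces the case to $W \in \mathbf{Y}$. For $W \in \mathbf{Y}$ with value $w$ in $\mathbf{y}$, the larger event forces $W(\mathbf{a}') = w$, so it contradicts the $\mathbf{a}'$-event $W(\mathbf{a}') = w'$ for any $w' \neq w$; but the satisfiable event $\mathbf{X}(\mathbf{a}) = \mathbf{x}$ does not fix the value of $W \notin \mathbf{X}$ (response functions range freely), so it is compatible with $W(\mathbf{a}') = w'$ for a suitable $w'$, and the compatibility class strictly shrinks.

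For sufficiency, suppose every $W \in \mathbf{W}$ has a differing causally relevant $\mathbf{A}$-ancestor $A_W$; fix $\epsilon_{\mathbf{V}}$ realizing $\mathbf{X}(\mathbf{a}) = \mathbf{x}$ with induced $\mathbf{a}'$-world $c$. By equivalence class completeness it suffices to exhibit target $\mathbf{a}$-world values $\{u_V\}$ that agree with $\mathbf{a}$ on $\mathbf{A}$, with $\mathbf{x}$ on $\mathbf{X}$, with $\mathbf{y}$ on $\mathbf{Y}$, and with $c$ at every $V$ whose parents are assigned the same tuple by $\{u_V\}$ and by $c$ (so that one response function per variable is consistent with both worlds). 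I would build $\{u_V\}$ from the $\mathbf{a}$-world of $\epsilon_{\mathbf{V}}$, processing $\mathbf{Y}$ and the variables it affects in topological order: for each $W \in \mathbf{W}$ that must change, reroute it to its required value along a causally relevant directed path from $A_W$ to $W$. Because $A_W$ is set differently by $\mathbf{a}$ and $\mathbf{a}'$, the two worlds already diverge at the head of this path, so the response functions of its internal vertices and of $W$ can be reset at their $\mathbf{a}$-world inputs without touching any $\mathbf{a}'$-world value; and any knock-on changes to $\mathbf{X}$-variables outside $\mathbf{W}$ are undone once every member of $\mathbf{W}$ has been re-pinned, since $\mathbf{Y}$ is causally irrelevant, given the rest of $\mathbf{X}$, to every such $\mathbf{X}$-variable.

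The hard part is making this last construction precise — checking that the rerouting can be performed simultaneously for all of $\mathbf{W}$, that fixing $\mathbf{Y}(\mathbf{a}) = \mathbf{y}$ does not perturb the $\mathbf{X}$-variables outside $\mathbf{W}$, and that it leaves $c$ intact — and this is exactly the step that pins down why $\mathbf{W}$ must consist of $\mathbf{Y}$ together with precisely those $\mathbf{X}$-descendants of $\mathbf{Y}$ to which $\mathbf{Y}$ stays causally relevant given the rest of $\mathbf{X}$. A minor auxiliary claim used in the necessity argument — that a satisfiable event $\mathbf{X}(\mathbf{a}) = \mathbf{x}$ never forces the value of a variable outside $\mathbf{X}$ — I would settle by the same response-function-freedom argument, choosing an $\epsilon_{\mathbf{V}}$ that realizes $\mathbf{X}(\mathbf{a}) = \mathbf{x}$ while assigning the variable in question any desired value.
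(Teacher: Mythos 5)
Your trivial direction and your necessity argument are fine --- the latter is in fact a slightly different (and arguably more elementary) construction than the paper's: where the paper exhibits the full counterexample event ${\bf X'(a') = x' \land Y'(a') = y'} \land Z({\bf a'}) \neq z$, you reduce to a violating $W \in {\bf Y}$ and use the single-variable event $W({\bf a'}) = w'$, with equivalence class completeness supplying the freedom to realize ${\bf X(a) = x}$ without pinning $W$. That part would go through.

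The genuine gap is in the sufficiency direction, and it sits exactly where you flag "the hard part." You attempt to prove sufficiency by directly constructing, for each $\epsilon_{\bf V}$ realizing ${\bf X(a) = x}$, a modified $\epsilon_{\bf V}$ realizing ${\bf X(a) = x \land Y(a) = y}$ with an identical ${\bf a'}$-world; the rerouting argument you sketch does not establish this. Two specific failures: (a) the fact that $A_W$ takes different values under ${\bf a}$ and ${\bf a'}$ does not mean the two worlds "diverge at the head of the path" in a way that reaches $W$'s parent configuration --- the response functions of the intermediate vertices for the given $\epsilon_{\bf V}$ may collapse the divergence before it arrives at $W$, and you cannot freely reset those intermediate response functions without risking a change to their own ${\bf a'}$-world values (which must stay at $c$) or to the event ${\bf X(a) = x}$ when intermediate vertices lie in ${\bf X}$; (b) performing this simultaneously for all of ${\bf W}$ requires the reroutings not to interfere with one another, which is asserted but not shown. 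This construction is essentially equivalent in difficulty to the proposition itself. The paper avoids it entirely: it invokes Proposition \ref{prop:necessary-contradiction} (the necessary-and-sufficient contradiction criterion under equivalence class completeness) as the workhorse, so that sufficiency reduces to a short case analysis on the location of the variable $Z$ that would witness a contradiction --- if $Z \in {\bf W}$, the differing causally relevant $\bf A$-ancestor violates condition $(i)$ of that criterion, and if ${\bf Y}$ is not causally relevant to $Z$ given the rest, then adjoining ${\bf Y(a) = y}$ changes nothing in the criterion's conditions at $Z$. You should restructure the sufficiency direction around Proposition \ref{prop:necessary-contradiction} rather than attempt the explicit response-function surgery.
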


\begin{proof}
  If an event does not contradict $\bf Y(a) = y \land X(a) = x$, it will not
  contradict the less restrictive event $\bf X(a) = x$.

  We consider an event compatible with $\bf X(a) = x$. By Proposition
  \ref{prop:necessary-contradiction}, if it is to contradict
  $\bf X(a) = x \land Y(a) = y$ under the equivalence class completeness
  assumption, then there must be a variable $Z$ satisfying the conditions of
  that proposition. This $Z$ cannot be in $\bf Y$, or any of its descendants in
  $\bf X$ to which it is causally relevant given the remainder of $\bf X$, by
  the condition that they each have a causally relevant ancestor in $\bf A$ that
  differs between $\bf a$ and $\bf a'$. If it is any variable to which $\bf Y$
  is not causally relevant, then the causally relevant ancestors are the same in
  $\bf X(a) = x \land Y(a) = y$ as in $\bf X(a) = x$, so the event must also be
  compatible with $\bf X(a) = x \land Y(a) = y$ if it is compatible with $\bf
  X(a) = x$.

  Finally, we demonstrate the necessity of these conditions. If they failed to
  hold, some variable in $\bf Y \cup X$ in $\bf Y$ or to which $\bf Y$ is
  causally relevant given the remainder of $\bf X$ would have no causally
  relevant ancestor in $\bf A$ that differed under the two interventions. We
  call such a variable $Z$, and say it takes value $z$. Then we construct the
  event $\bf X'(a') = x' \land \bf Y'(a') = y' \land Z(a') \neq z$, with $\bf
  X', Y'$ denoting ${\bf X} \setminus \{Z\}, {\bf Y} \setminus \{Z\}$. This
  event contradicts $\bf Y(a) = y \land X(a) = x$ but does not contradict $\bf
  X(a) = x$ by Proposition \ref{prop:necessary-contradiction}.
\end{proof}

We note that because the bounds derived by Corollary \ref{cor:lower-bounds} do
not make use of any additional implications that may result from violations of
the equivalence class completeness assumption, these results lead directly to
the following Corollary:

\begin{corollary}
  \label{cor:superset}
  Let the event $\tilde E \triangleq \bf Y({\bf a}) = y \land \bf X(a) = x$ be
  compatible with the same events under $\bf A = a'$ as $\bf X(a) = x$ by
  Proposition \ref{prop:same-compat}, and be a valid candidate for $E_1$
  ($E_2$). Then by Proposition \ref{prop:bound-irrelevance-superset}, any event
  $\bf X(a) = x \land \bf W(a) = y_w$, with $\bf W \subset Y$,  that is also a
  valid candidate for $E_1$ ($E_2$) will induce a better bound through Corollary
  \ref{cor:lower-bounds} than $\tilde E$.
\end{corollary}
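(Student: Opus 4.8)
The plan is to derive this Corollary directly from Proposition~\ref{prop:bound-irrelevance-superset}, which says that if two valid candidates for $E_1$ (resp.\ $E_2$) are compatible with exactly the same events under the competing intervention, then the candidate with the larger density induces the better bound. We are already handed, via Proposition~\ref{prop:same-compat}, that $\tilde E \triangleq \bf Y(a) = y \land X(a) = x$ is compatible under $\bf A = a'$ with the same events as $\bf X(a) = x$. So the only real work is to show that the intermediate event $\bf X(a) = x \land W(a) = y_w$ (with $\bf W \subset Y$ and $y_w$ the restriction of $y$ to $\bf W$, so that it is entailed by $\tilde E$ and entails $\bf X(a) = x$) is also compatible under $\bf A = a'$ with precisely this same set of events, and then to compare densities.

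First I would record the elementary monotonicity of compatibility: if $E' \Rightarrow E$ and an event $F$ under some intervention is compatible with $E'$, then $F$ is compatible with $E$, since any value of $\epsilon_{\bf V}$ witnessing both $F$ and $E'$ also witnesses both $F$ and $E$ (equivalently, $F$ contradicting $E$ forces $F$ to contradict $E'$). Applying this along the chain of implications
\[
\tilde E \;\Longrightarrow\; \big(\bf X(a) = x \land W(a) = y_w\big) \;\Longrightarrow\; \bf X(a) = x ,
\]
the set of events under $\bf A = a'$ compatible with the middle event is sandwiched between the sets compatible with the outer two. Since by hypothesis the outer two sets coincide, the middle one coincides with them as well; hence $\bf X(a) = x \land W(a) = y_w$ is compatible under $\bf A = a'$ with exactly the same events as $\tilde E$.

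It then remains to invoke Proposition~\ref{prop:bound-irrelevance-superset} for the pair $\tilde E$ and $\bf X(a) = x \land W(a) = y_w$: both are valid candidates for $E_1$ (resp.\ $E_2$) by hypothesis and are now known to be compatible with the same events under $\bf A = a'$, so the one of larger density induces the better bound in Corollary~\ref{cor:lower-bounds}. Since $\tilde E \Rightarrow \bf X(a) = x \land W(a) = y_w$, Proposition~\ref{prop:manski} gives $P(\tilde E) \le P(\bf X(a) = x \land W(a) = y_w)$, so the latter induces a weakly better bound, as claimed; the argument for $E_2$ is symmetric. I expect the main obstacle to be bookkeeping rather than depth: one must get the direction of the compatibility-set inclusions right so that the equality genuinely propagates along the implication chain, and one must keep straight which event is the target $\bf X(a) = x$ and which is the competing-world event against which compatibility is assessed. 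It is worth noting that the monotonicity step itself does not use equivalence class completeness; completeness enters only through the characterization in Proposition~\ref{prop:same-compat} that we take as input, which is exactly why the preceding remark that these bounds ignore implications arising from violations of that assumption makes the reduction ``direct.''
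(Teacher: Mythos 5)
Your proposal is correct and follows essentially the route the paper intends: the paper states this corollary without a separate proof, asserting it ``leads directly'' from Propositions~\ref{prop:same-compat} and~\ref{prop:bound-irrelevance-superset}, and your argument is exactly that reduction, with the one implicit step --- that the intermediate event ${\bf X(a) = x \land W(a) = y_w}$ has the same compatibility set, via the sandwich $\mathrm{compat}(\tilde E) \subseteq \mathrm{compat}({\bf X(a)=x \land W(a)=y_w}) \subseteq \mathrm{compat}({\bf X(a)=x})$ --- made explicit. Your reading that ${\bf y_w}$ is the restriction of ${\bf y}$ to ${\bf W}$ (so that $\tilde E$ entails the intermediate event and Proposition~\ref{prop:manski} gives the density comparison) is the correct and indeed necessary interpretation of the statement.
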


\section{Numerical Examples of Bound Width}
\label{sec:numerical-examples}

In this section, we provide numerical examples of bounds in two models. These
examples demonstrate that bounds tend to be most informative when the instrument
and treatment are highly correlated. It is our hope that they will provide some
intuition about when these bounds will be of use.

\begin{figure}
  \centering
  {\small
    \begin{tikzpicture}[>=stealth, node distance=1.25cm]
      \tikzstyle{vertex} = [
      draw, very thick, minimum size=4.0mm, inner sep=0pt
      ]
      \tikzstyle{edge} = [
      ->, blue, very thick
      ]

      \node[vertex, circle] (a1) {$A_1$};
      \node[vertex, circle] (a2) [right of=a1] {$A_2$};
      \node[vertex, circle] (y) [right of=a2] {$Y$};

      \draw[edge] (a1) to (a2);
      \draw[edge] (a2) to (y);
      \draw[edge] (a1) [bend right=23] to (y);
      \draw[edge, red, <->] (a1) [bend left=23] to (y);
    \end{tikzpicture}
    }
    \vspace{-0.2cm}
  \caption{
    The Inclusive Frontdoor Model
  }
  \label{fig:frontdoor}
  \end{figure}
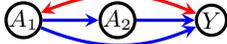

Consider the causal model described by the graph in Fig.~ \ref{fig:frontdoor}.
Suppose all variables are binary and we are interested in the probability
$P(Y(A_1 = 1, A_2 = 1) = 1)$. If we observe the following probabilities,
\begin{align*}
  P(A_1 = 1, A_2 = 1, Y = 1) &= .01\\
  P(A_1 = 1, A_2 = 1, Y = 0) &= .08,
\end{align*}
we can use Corollary \ref{cor:trivial-bounds} to obtain the bounds
\begin{align*}
  .01 &\le P(Y(A_1 = 1, A_2 = 1) = 1) \le .92.
\end{align*}
These bounds are quite wide, and unlikely to be informative. Now suppose that we
observe the following conditional probability
\begin{align}
\label{eq:cor-ge}
P(A_2 = 1 \mid A_1 = 1) &= .1.
\end{align}
Noting that $P(A_1(a_2) = a_1, Y(a_2) = y)$ is identified as $\frac{P(a_1, a_2,
  y)}{P(a_2 \mid a_1)}$, we can now use Corollary \ref{cor:fix-subset} to obtain
the bounds
\begin{align*}
  .1 &\le P(Y(A_1 = 1, A_2 = 1) = 1) \le .2.
\end{align*}
These bounds are much tighter, and exclude .5, which may be important in some
cases. If instead we observe the conditional probability
\begin{align}
\label{eq:uncor-ge}
  P(A_2 = 1 \mid A_1 = 1) &= .5,
\end{align}
then Corollary \ref{cor:fix-subset} yields the much less informative bounds
\begin{align*}
  .02 &\le P(Y(A_1 = 1, A_2 = 1) = 1) \le .84.
\end{align*}
In this example $A_2$ is used as a generalized instrument for $A_1$, as
discussed in Section \ref{sec:bounds}. The tightness of the bounds therefore
depends on the relationship between the two, as demonstrated by the differences
in bounds under the conditional probabilities (\ref{eq:cor-ge}) and
(\ref{eq:uncor-ge}). 

We now consider the IV Model with Covariates, depicted in Fig.~\ref{fig:graphs}
(a), and discussed in Section \ref{sec:examples}. To build an understanding of
the utility of our bounds, we randomly generated distributions from the model.
These distributions were generated by sampling the parameter for each observed
binary random variable, conditional on each setting of its parents, from a
symmetric Beta distribution, with parameters equal to $1$. The unobserved
variable $U$ was assumed to have cardinality $16$, to allow for every possible
equivalence class \cite{balke-thesis}, and its distribution was drawn from a
symmetric Dirichlet distribution with parameters equal to $0.1$.

We then calculated the correlation between $A$ and $Z$, as well as bounds on the
ACE, $E[Y(A = 1) - Y(A = 0)]$, for each distribution. The results are presented
in Fig.~\ref{fig:plot}.

\begin{figure}
  \centering
  \includegraphics[width=\linewidth]{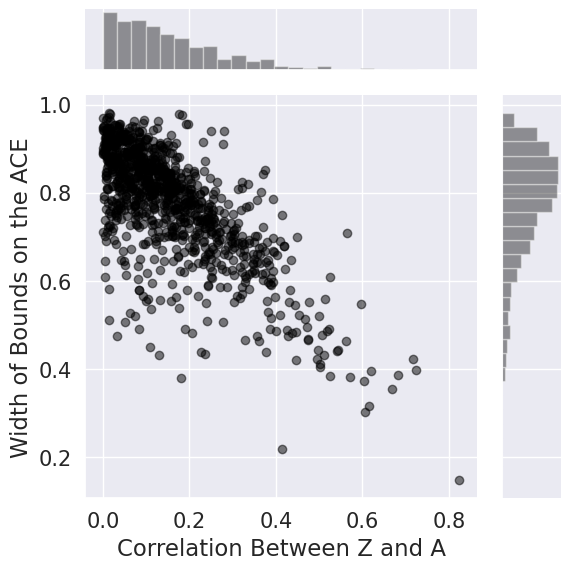}
  \vspace{-0.5cm}
  \caption{
    Each point represents a randomly generated distribution in the IV model with
    covariates, depicted in Fig.~\ref{fig:graphs} (a). This plot shows
    informally that correlation between the generalized instrument $Z$ and the
    treatment $A$ is associated with tighter bounds on the ACE, $E[Y(A = 1) -
    Y(A = 0)]$. The histograms show the marginal distributions of correlation
    and bound width.
  }
  \label{fig:plot}
\end{figure}

We observe that, as expected, greater correlation between the generalized
instrument and the treatment is associated with tighter bounds. This pattern
persisted across simulations for additional models, and across various
approaches to sampling distributions from the model.

The marginal distributions of correlation and bound width, shown as histograms
in Fig.~\ref{fig:plot} seem to be sensitive to the approach used to sample
distributions from the model. For distributions sampled as described above, and
used to generate Fig.~\ref{fig:plot}, the mean width of bounds on the ACE was
$0.77$, with a standard deviation of $0.12$. In $4\%$ of the distributions, the
bounds excluded $0$. We find that these values are also sensitive to changes in
how distributions were sampled.

\section{Bounds on $P(Y(\bar a) = \bar y)$ in the IV Model With Covariates}
\label{sec:bound-derivations}

The remainder of this section is a LaTeX friendly printout of the steps taken by
our implementation of the algorithm described in this work when applied to
bounding $P(\bar y(\bar a))$ in the IV model with covariates, as described in
Section \ref{sec:examples}.

To begin, we partition $\bar y(\bar a)$ as described in Proposition \ref{prop:partition}:
\begin{align*}\bar y(\bar a)
 \land a(\bar z)
 \land a(z)
\\
\bar a(\bar z) \land \bar y(\bar z)
\\
(a(\bar z)
 \land (\bar a(z) \land \bar y(z))
).
\end{align*}
As before, no lower bound is provided for the first event in the partition, and
the second has an identified density. We now consider the last event in the
partition $a(\bar z) \land (\bar a(z) \land \bar y(z))$.

The following events imply $a(\bar z)
$ and can therefore be used as $E_1$ events in Corollary \ref{cor:lower-bounds}:
\begin{align*}
a(\bar z) \land \bar y(\bar z)
\\
a(\bar z) \land y(\bar z)
\\
a(\bar z) \land \bar c(\bar z)
\\
a(\bar z) \land c(\bar z)
\\
a(\bar z) \land \bar y(\bar z) \land \bar c(\bar z)
\\
a(\bar z) \land \bar y(\bar z) \land c(\bar z)
\\
a(\bar z) \land y(\bar z) \land \bar c(\bar z)
\\
a(\bar z) \land y(\bar z) \land c(\bar z).
\end{align*}
Likewise, the following events imply $\bar a(z) \land \bar y(z)
$ and can therefore be used as $E_2$ events in Corollary \ref{cor:lower-bounds}:
\begin{align*}
\bar a(z) \land \bar y(z)
\\
\bar a(z) \land \bar y(z) \land \bar c(z)
\\
\bar a(z) \land \bar y(z) \land c(z)
\end{align*}
By Proposition \ref{prop:bound-irrelevance}, $a(\bar z)
$, which implies $a(\bar z)
$, and therefore
would be a candidate for use as an $E_1$ event, is redundant.

 We now iterate through each potential event for $E_1$and $E_2$, examining the
 resulting bound.

The event $a(\bar z) \land \bar y(\bar z)
$ is compatible with
 $\bar y(z)
 \lor (\bar a(z) \land y(z))
$. Therefore to compute the bound induced by using it as an $E_1$ event, we must subtract from its density the portion of this compatible event that does not entail the negation of $\bar a(z) \land \bar y(z)
$. This portion is $(a(z) \land \bar y(z))
 \lor (\bar a(z) \land y(z))
$, yielding the bound $P_{\bar z}(a, \bar y)
            - \big(P_{z}(a, \bar y)
 + P_{z}(\bar a, y)
\big)$.

The event $a(\bar z) \land y(\bar z)
$ is compatible with
 $(\bar a(z) \land \bar y(z))
 \lor y(z)
$. Therefore to compute the bound induced by using it as an $E_1$ event, we must subtract from its density the portion of this compatible event that does not entail the negation of $\bar a(z) \land \bar y(z)
$. This portion is $y(z)
$, yielding the bound $P_{\bar z}(a, y)
            - P_{z}(y)
$.

The event $a(\bar z) \land \bar c(\bar z)
$ is compatible with
 $(a(z) \land \bar c(z))
 \lor c(z)
$. Therefore to compute the bound induced by using it as an $E_1$ event, we must subtract from its density the portion of this compatible event that does not entail the negation of $\bar a(z) \land \bar y(z)
$. This portion is $(a(z) \land y(z) \land \bar c(z))
 \lor (a(z) \land \bar y(z))
 \lor (y(z) \land c(z))
$, yielding the bound $P_{\bar z}(a, \bar c)
            - \big(P_{z}(a, y, \bar c)
 + P_{z}(a, \bar y)
 + P_{z}(y, c)
\big)$.

The event $a(\bar z) \land c(\bar z)
$ is compatible with
 $\bar c(z)
 \lor (a(z) \land c(z))
$. Therefore to compute the bound induced by using it as an $E_1$ event, we must subtract from its density the portion of this compatible event that does not entail the negation of $\bar a(z) \land \bar y(z)
$. This portion is $(y(z) \land \bar c(z))
\lor (a(z) \land \bar y(z) \land \bar c(z))
 \lor (a(z) \land c(z))
$, yielding the bound $P_{\bar z}(a, c)
            - \big(P_{z}(y, \bar c)
 + P_{z}(a, \bar y, \bar c)
 + P_{z}(a, c)
\big)$.

The event $a(\bar z) \land \bar y(\bar z) \land \bar c(\bar z)
$ is compatible with
 $(\bar a(z) \land y(z) \land c(z))
 \lor (a(z) \land \bar y(z) \land \bar c(z))
 \lor (\bar y(z) \land c(z))
$. Therefore to compute the bound induced by using it as an $E_1$ event, we must subtract from its density the portion of this compatible event that does not entail the negation of $\bar a(z) \land \bar y(z)
$. This portion is $(\bar a(z) \land y(z) \land c(z))
 \lor (a(z) \land \bar y(z))
$, yielding the bound $P_{\bar z}(a, \bar y, \bar c)
            - \big(P_{z}(\bar a, y, c)
 + P_{z}(a, \bar y)
\big)$.

The event $a(\bar z) \land \bar y(\bar z) \land c(\bar z)
$ is compatible with
 $(\bar y(z) \land \bar c(z))
 \lor (\bar a(z) \land y(z) \land \bar c(z))
 \lor (a(z) \land \bar y(z) \land c(z))
$. Therefore to compute the bound induced by using it as an $E_1$ event, we must subtract from its density the portion of this compatible event that does not entail the negation of $\bar a(z) \land \bar y(z)
$. This portion is $(\bar a(z) \land y(z) \land \bar c(z))
 \lor (a(z) \land \bar y(z))
$, yielding the bound $P_{\bar z}(a, \bar y, c)
            - \big(P_{z}(\bar a, y, \bar c)
 + P_{z}(a, \bar y)
\big)$.

The event $a(\bar z) \land y(\bar z) \land \bar c(\bar z)
$ is compatible with
 $(a(z) \land y(z) \land \bar c(z))
 \lor (\bar a(z) \land \bar y(z) \land c(z))
 \lor (y(z) \land c(z))
$. Therefore to compute the bound induced by using it as an $E_1$ event, we must subtract from its density the portion of this compatible event that does not entail the negation of $\bar a(z) \land \bar y(z)
$. This portion is $(a(z) \land y(z) \land \bar c(z))
 \lor (y(z) \land c(z))
$, yielding the bound $P_{\bar z}(a, y, \bar c)
            - \big(P_{z}(a, y, \bar c)
 + P_{z}(y, c)
\big)$.

The event $a(\bar z) \land y(\bar z) \land c(\bar z)
$ is compatible with
 $(y(z) \land \bar c(z))
 \lor (\bar a(z) \land \bar y(z) \land \bar c(z))
 \lor (a(z) \land y(z) \land c(z))
$. Therefore to compute the bound induced by using it as an $E_1$ event, we must subtract from its density the portion of this compatible event that does not entail the negation of $\bar a(z) \land \bar y(z)
$. This portion is $(y(z) \land \bar c(z))
 \lor (a(z) \land y(z) \land c(z))
$, yielding the bound $P_{\bar z}(a, y, c)
            - \big(P_{z}(y, \bar c)
 + P_{z}(a, y, c)
\big)$.

The event $\bar a(z) \land \bar y(z)
$is compatible with
 $\bar y(\bar z)
 \lor (a(\bar z) \land y(\bar z))
$. Therefore to compute the bound induced by using it as an $E_2$ event, we must subtract from its density the portion of this compatible event that does not entail the negation of $a(\bar z)
$. This portion is $(\bar a(\bar z) \land \bar y(\bar z))
$, yielding the bound $P_{z}(\bar a, \bar y)
            - P_{\bar z}(\bar a, \bar y)
$.

The event $\bar a(z) \land \bar y(z) \land \bar c(z)
$is compatible with
 $(\bar y(\bar z) \land c(\bar z))
 \lor (a(\bar z) \land y(\bar z) \land c(\bar z))
 \lor (\bar a(\bar z) \land \bar y(\bar z) \land \bar c(\bar z))
$. Therefore to compute the bound induced by using it as an $E_2$ event, we must subtract from its density the portion of this compatible event that does not entail the negation of $a(\bar z)
$. This portion is $(\bar a(\bar z) \land \bar y(\bar z))
$, yielding the bound $P_{z}(\bar a, \bar y, \bar c)
            - P_{\bar z}(\bar a, \bar y)
$.

The event $\bar a(z) \land \bar y(z) \land c(z)
$is compatible with
 $(\bar a(\bar z) \land \bar y(\bar z) \land c(\bar z))
 \lor (a(\bar z) \land y(\bar z) \land \bar c(\bar z))
 \lor (\bar y(\bar z) \land \bar c(\bar z))
$. Therefore to compute the bound induced by using it as an $E_2$ event, we must subtract from its density the portion of this compatible event that does not entail the negation of $a(\bar z)
$. This portion is $(\bar a(\bar z) \land \bar y(\bar z))
$, yielding the bound $P_{z}(\bar a, \bar y, c)
            - P_{\bar z}(\bar a, \bar y)
$.

This concludes the derivation of the bounds presented for the IV model with
covariates in Section \ref{sec:examples}.

\end{document}